\documentclass{article}
\usepackage[a4paper, margin=1.2in]{geometry}
\usepackage{authblk}

\usepackage{url}
\usepackage[hidelinks]{hyperref}
\usepackage[utf8]{inputenc}
\usepackage[small]{caption}
\usepackage{graphicx}
\usepackage{amsmath}
\usepackage{amsthm}
\usepackage{booktabs}
\usepackage{algorithm}
\usepackage{algorithmic}
%
\usepackage{amssymb,amsfonts}
\usepackage{dsfont}

\usepackage[T1]{fontenc}
\newcommand{\citet}[1]{\citeauthor{#1}~[\citeyear{#1}]}
\usepackage{tikz}

\usepackage[backend=biber,maxbibnames = 100]{biblatex}
\addbibresource{bibliography.bib}
\DeclareNameAlias{default}{family-given}

\usepackage{amsthm}
\usepackage{amsmath}
\usepackage{amsfonts}
\usepackage{tikz}
\usepackage[T1]{fontenc}
\usepackage{subcaption}

\hyphenation{Page-Rank}

\newtheorem{example}{Example}
\newtheorem{theorem}{Theorem}
\newtheorem{lemma}[theorem]{Lemma}
\newtheorem{proposition}[theorem]{Proposition}

\newcommand{\e}{\! : \!}
\newcommand{\ee}{\!\! : \!\!}

\newcommand{\appname}{} 
\newcommand{\appnumber}{} 
\newtheorem*{appp}{\appname\ \appnumber}
\newenvironment{apptheorem}[2]
  {\renewcommand{\appname}{#1}%
   \renewcommand{\appnumber}{#2}%
   \begin{appp}}
  {\end{appp}}

\title{PageRank for Edges: Axiomatic Characterization}

\author{Natalia Kucharczuk}

\author{Tomasz Wąs}

\author{Oskar Skibski%
\thanks{\texttt{nk406686@students.mimuw.edu.pl, t.was@mimuw.edu.pl, o.skibski@mimuw.edu.pl} \\
This work is an extended version of \citet{Kucharczuk:etal:2022:edgepagerank} that will appear in the Proceedings of the 36th AAAI Conference on Artificial Intelligence (AAAI-22).
It includes the complete proofs of all theorems which are omitted in the conference publication. \\
This work was supported by the Polish National Science Centre grant 2018/31/B/ST6/03201.
Natalia Kucharczuk was supported by the Ministry of Science and Higher Education project Szkola Orlow, project number 500-D110-06-0465160.}
}

\affil{University of Warsaw}

\date{}

\begin{document}

\maketitle

\begin{abstract}
Edge centrality measures are functions that evaluate the importance of edges in a network.
They can be used to assess the role of a backlink for the popularity of a website as well as the importance of a flight in virus spreading.
Various node centralities have been translated to apply for edges, including Edge Betweenness, Eigenedge (edge version of Eigenvector centrality) and Edge PageRank.
With this paper, we initiate the discussion on the axiomatic properties of edge centrality measures.
We do it by proposing an axiomatic characterization of Edge PageRank.
Our characterization is the first characterization of any edge centrality measure in the literature.
\end{abstract}

\section{Introduction}

Centrality measures that evaluate the importance of nodes and edges in a network constitute one of the fundamental tools of network analysis~\cite{Brandes:Erlebach:2005,Jackson:2005}. 
In complex networks that describe the surrounding world, they enable us to indicate the most significant genes~\cite{Oezguer:etal:2008}, key terrorists~\cite{Krebs:2002:mapping} and important pages in the World Wide Web~\cite{Page:etal:1999}.

Historically, centrality analysis was developed in social networks literature.
Hence, the vast majority of work concentrates on nodes which represent people in such networks.
However, in many types of networks, edges represent entities that we want to assess.
In particular, edge evaluation may indicate which backlink to our website is the most profitable in Search Engine Optimization~\cite{Ledford:2015} or which flight should be canceled in order to delay virus spreading~\cite{Marcelino:Kaiser:2012}.
Edge centralities have also been used to identify edges that connect different communities in clustering algorithms~\cite{Newman:2004}.

Various node centralities have been translated to apply for edges, including \emph{Edge Betweenness}~\cite{Girvan:Newman:2002}, \emph{Eigenedge} (edge version of Eigenvector centrality)~\cite{Huang:Huang:2019}, and \emph{Edge PageRank}~\cite{Chapela:etal:2015}. 
However, other measures not related to any node concepts have also been proposed in the literature, such as \emph{Spanning Edge Betweenness} defined as the fraction of spanning trees that contain a specific edge~\cite{Teixeira:etal:2013}.

Multiplicity of centrality measures constitute a problem of its own, as it is becoming harder to choose one measure for a specific application.
In result, more than often a measure to use is selected based on its intuitive understanding or popularity rather than its suitability and desirable features.
That is why, in recent years, efforts at organizing the space of centrality measures intensified~\cite{Schoch:Brandes:2015,Bloch:etal:2016}.

The axiomatic approach is one of the most convenient methods for this goal as it highlights similarities and differences between various concepts.
In this approach, simple and desirable properties are identified that capture specific features of centrality measures.
Choosing a carefully designed set of axioms allows to create a unique characterization of the measure that is more intuitive and easier to relate to the considered application.

To date, plenty of node centrality measures have been axiomatized~\cite{Boldi:Vigna:2014}.
Notably, much research focused on feedback centralities~\cite{Altman:Tennenholtz:2005,Dequiedt:Zenou:2014}, although distance-based centralities~\cite{Garg:2009} and game-theoretic centralities~\cite{Skibski:etal:2019:attachment} have also received considerable attention.
However, to the best of our knowledge, so far no paper has considered edge centrality measures.

With this paper, we initiate the discussion on the axiomatic properties of edge centrality measures.
We do it by proposing an axiomatic characterization of Edge PageRank.
Edge PageRank, analogically to standard PageRank, can be defined as the unique solution to the system of recursive equations that ties centralities of incident edges.
Our characterization is the first characterization of any edge centrality.

We base our work on a recent characterization of (node) PageRank~\cite{Was:Skibski:2020:pagerank} and ask: is it possible to adapt such a characterization of the node centrality for the edge centrality measure? 
We answer positively to this question.
Specifically, we define six axioms that correspond to axioms proposed for PageRank and show that Edge PageRank is the only measure that satisfies all of them.


The main technical challenge comes from the fact that edge centrality measures consider pairs of nodes, which adds a new dimension to the complexity of the problem.
In result, analyzing edge centralities cannot be reduced to the analysis of node centralities.
We discuss this on the example of line graph approach.
Moreover, this means that also the expressive power of axioms can vary.
For these reasons, our proof is essentially different from the proof for (node) PageRank.

\section{Preliminaries}
In this paper, we consider directed multigraphs with node weights and possible self-loops.
Since the emphasis of our work is on edges, not nodes, we need a model of a multigraph in which every edge is a separate entity. 
To this end, in our model each edge has its own label and an additional function specifies the start and the end of this edge.
See Figure~\ref{fig:small_graph} for an illustration.
In this way, we do not assume that all edges between the same pair of nodes are equally important which could lead to other undesirable (or unexpected) implications.\footnote{To give an example, if we consider unlabeled edges and know that changing the end of edge $e_5$ from $v_2$ to $v_1$ in the graph from Figure~\ref{fig:small_graph} does not affect centrality of any edge, then we get that edges $e_4$, $e_5$ and $e_6$ all have equal centralities.}

Formally, we define a \textit{graph} as a tuple $G = (V, E, \phi, b)$, where $V$ is a set of \textit{nodes}, $E$ is a set of \textit{edges}, $\phi: E \rightarrow V \times V$ is an \textit{incidence function} mapping every edge to an ordered pair of nodes, and $b: V \rightarrow \mathbb{R}_{\geq 0}$ is a node weight function. 

For an edge $e \in E$ with $\phi(e) = (u,v)$, we denote the start $u$ and the end $v$ by $\phi_1(e)$ and $\phi_2(e)$, respectively. 
This edge is an \emph{outgoing edge} for $u$ and \emph{incoming edge} for $v$.

For node $v$, the set of all incoming (outgoing) edges is denoted by $E^-_v(G)$ ($E^+_v(G)$).
The sizes of these sets are called \emph{in-degree} and \emph{out-degree} of node $v$, i.e., $\deg^-_v(G) = |E^-_v(G)|$ and $\deg^+_v(G) = |E^+_v(G)|$.
The set of all incident edges, both incoming and outgoing, is denoted by $E^{\pm}_v(G)$.

Two nodes $u$ and $v$ are called \textit{out-twins} if there exists a bijection $\psi: E^+_u(G) \rightarrow E^+_v(G)$ such that $\phi_2(e) = \phi_2(\psi (e))$ for every $e \in E^+_u(G)$.

A node $v$ is a \emph{sink} if it has no outgoing edges: $E^+_v(G) = \emptyset$.
If it does not have incoming edges as well, i.e., $E^{\pm}_v(G) = \emptyset$, then it is an \emph{isolated node}.
If $v$ has no incoming, but one outgoing edge, i.e., $E^{\pm}_v(G) = \{e\}$, then this edge is called a \emph{source edge}.
If a source edge $e$ is the only edge incident to its end, i.e., $\phi(e) = (v,u)$ and $E^{\pm}_u(G) = \{e\}$, then we say it is an \emph{isolated edge}.

A \emph{path} from $u$ to $v$ is a sequence of edges $e_1, \ldots, e_k$ such that $\phi_1(e_1) = u, \ \phi_2(e_k) = v$ and $\phi_2(e_{i})=\phi_1(e_{i+1})$ for every $i \in \{1,\ldots,k-1\}$. 
If there exists a path from $u$ to $v$, then $v$ is called a \emph{successor} of $u$.
The set of all successors of $u$ is denoted by $S_u(G)$.

Consider an arbitrary function $f: A \rightarrow X$.
We will use the following shorthand function notation.
For $a \in A$ and $x \in X$ we denote by $f[a \rightarrow x]$ the function obtained by replacing the value of $a$ with $x$:
\[
f[a\rightarrow x](c) = 
\begin{cases}
    x & \textrm{if } a=c,\\
    f(c) & \textrm{otherwise.}
\end{cases}
\]
Also, for $B \subseteq A$, we denote the function $f$ restricted to $B$ by $f|_{B}: B \rightarrow X$.
For a second function $f': A' \rightarrow X$ with $A \cap A' = \emptyset$, we define function $(f+f'): A \cup A' \rightarrow X$ as follows: $(f+f')(c) = f(c)$ for $c \in A$ and $(f+f')(c) = f'(c)$ for $c \in A'$.

The \emph{sum} of two disjoint graphs $G = (V, E, \phi,b)$ and $G' = (V', E', \phi',b')$ with $V \cap V' = \emptyset$ and $E \cap E' = \emptyset$ is defined as $G + G' = (V \cup V', E\cup E', \phi+\phi', b+b')$.

When graph $G=(V,E,\phi,b)$ is known from the context, we will simply write
``$e \e (u,v) \in E$'' by which we understand ``$e \in E$ s.t. $\phi(e) = (u,v)$''.
Also, to denote small graphs we will write:
\[ G = (\{v_1, \dots, v_n\}, \{e_1 : c_1, \dots, e_m : c_m\}, [b_1, \dots, b_n]) \]
which means $G = (\{v_1, \dots, v_n\},\{e_1, \dots, e_m\},\phi, b)$ such that $\phi(e_i) = c_i$ for every $i \in \{1,\dots,m\}$ and $b(v_j) = b_j$ for every $j=\{1,\dots, n\}$.

For a graph $G = (V,E,\phi,b)$ and nodes $u,v \in V$, a graph obtained from \emph{redirecting} node $u$ into node $v$ is denoted by $R_{u\rightarrow v} (G)$ and defined as follows: 
\[ R_{u\rightarrow v} (G) = (V\setminus \{u\}, E\setminus E^+_u(G), \phi', b'), \]
where $b' = (b[v \rightarrow b(u) + b(v)])|_{V \setminus \{u\}}$ and $\phi'(e) = (w,v)$ if $\phi(e) = (w,u)$ and $\phi'(e) = \phi(e)$ otherwise. 



\begin{figure}[t]
\centering
\begin{tikzpicture}[x=2cm,y=2cm] 
  \tikzset{     
    e4c node/.style={circle,draw,minimum size=0.6cm,inner sep=0}, 
    e4c edge/.style={above,font=\footnotesize}
  }
  
  \node[e4c node] (1) at (0.0, 4.1) {$v_1$}; 
  \node[e4c node] (2) at (1.0, 4.1) {$v_2$}; 
  \node[e4c node] (3) at (1.0, 3.3) {$v_3$}; 
  \node[e4c node] (4) at (0.0, 3.3) {$v_4$};

  \path[->,draw,thick]
  (1) edge[e4c edge, out=145, in = 215, looseness=6] node[left] {$e_1$}  (1)
  (1) edge[e4c edge] node[left] {$e_2$}  (4)
  (2) edge[e4c edge] node[above, pos = 0.3] {$e_3$}  (4)
  (3) edge[e4c edge] node[above, pos =0.7] {$e_4$}  (1)
  (3) edge[e4c edge, bend left=20] node[pos=0.3, label={[label distance=-0.2cm]180:$e_5$}] {}  (2)
  (3) edge[e4c edge, bend right=20] node[pos=0.3, label={[label distance=-0.2cm]0:$e_6$}] {}  (2)
  (3) edge[e4c edge, bend left=20] node[label={[label distance=-0.35cm]$e_7$}] {}  (4)
  (4) edge[e4c edge, bend left=20] node[label={[label distance=-0.35cm]$e_8$}] {}  (3)
  ;
\end{tikzpicture}


\caption{An example graph $G=(V,E,\phi,b)$ with nodes $v_1, \dots, v_4$ and edges $e_1, \dots, e_8$.
The incident function, $\phi$, specifies the start and the end of each edge, e.g., $\phi(e_5)=\phi(e_6)=(v_3, v_2)$.
We assume uniform node weights, i.e., $b(v_i) = 1$ for every $v_i \in V$.}
\label{fig:small_graph}
\end{figure}
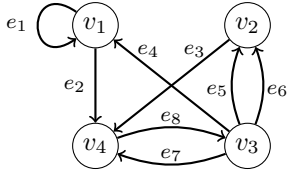










\subsection{Edge PageRank}

An \emph{(edge) centrality measure} $F$ is a function that assesses the importance of an edge $e$ in a graph $G$; this value is denoted by $F_e(G)$ and is non-negative.

\emph{Edge PageRank}~\cite{Chapela:etal:2015} is defined as a unique centrality measure that for every graph $G=(V,E,\phi, b)$ and edge $e \e (u,v) \in E$ satisfies recursive equation
\begin{equation}
\label{eq:pagerank}
    PR^a_{e}(G) = \frac{1}{\deg_{u}^+(G)} \left( a\cdot\sum_{e' \in E^-_u(G)} PR^a_{e'}(G) + b(u)\right),
\end{equation}
where constant $a \in [0,1)$ is a \emph{decay factor}, i.e., a parameter of Edge PageRank.

Equivalently, we can define Edge PageRank using random walks on a graph.
To this end, imagine a surfer that travels throughout a graph in a schematic manner.
She starts her walk from a random node 
(with the probability that the surfer starts in a node proportional to the weight of this node).
Then, in each step, she makes two choices:
first, she chooses one of the outgoing edges of a node she currently occupies, uniformly at random, and follows it to the next node;
second, she decides whether she wants to continue the walk (with probability $a$) or end it (with probability $1-a$).
If the surfer ever arrives at a sink, she ends her walk automatically.
Now, in such a walk, PageRank of an edge is the expected number of times the surfer traversed this edge, multiplied by the sum of node weights in the graph.
We note that this model is a slight variation on a standard random walk model used for interpretation of PageRank.
The only difference between our model and the model presented in~\cite{Was:Skibski:2020:pagerank} is the order of the decisions made by the surfer:
there, she first decides whether she continues the walk and only if so, she chooses an outgoing edge.
In result, the expected number of traverses over an edge is multiplied by $a$ in comparison to our model.

Additionally, the definition of Edge PageRank can be also based on its relation with (node) PageRank.
Specifically, for every graph, $G=(V,E,\phi, b)$, and edge, $e \e (u,v) \in E$, it holds that
\[
    PR^a_e(G) = PR^a_u(G)/\deg^+_u(G),
\]
where $PR^a_{u}(G)$ is PageRank of node $u$ in graph $G$.
This relation is true for multigraphs, which we consider in this paper.
If, instead, we considered graphs with edge weights, $\omega : E \rightarrow \mathbb{R}_{>0}$, then the relation would be
\[
    PR^a_e(G) = \frac{\omega(e)}{\sum_{e' \in E^+_u} \omega(e')} \cdot PR^a_u(G).
\]

Finally, Edge PageRank can be also defined as a (node) PageRank of the corresponding line graph.
We will discuss this in details later in the text.

\begin{example}
An example application for Edge PageRank is to create a ranking of the hyperlinks pointing toward a webpage in the order of their importance for the Internet traffic \cite{Criado:etal:2018}.
Often, the obtained ranking does not coincide with the PageRank ranking of the webpages they are coming from.
To see why, let us consider Edge PageRank of incoming edges of node $v_4$ in graph $G$ from Figure~\ref{fig:small_graph}.
Each of them comes from a different node: $e_2$ from $v_1$, $e_3$ from $v_2$, and $e_7$ from $v_3$.
Among these nodes, the greatest PageRank has $v_3$, followed by $v_1$, and then $v_2$.
However, $e_3$ is the only outgoing edge of node $v_2$, whereas both $v_3$ and $v_1$ have multiple outgoing edges.
In result, Edge PageRank of edge $e_3$ is greater than that of $e_2$ and $e_7$, making it the most important incoming edge of $v_4$.
PageRank values of nodes and edges considered in this example for $a=0.9$ are presented in the following table:

\begin{table}[h]
\centering
{\setlength{\tabcolsep}{3pt}
\renewcommand{\arraystretch}{1.2}
\textit{
\begin{tabular}{c|c}
    node $v$  & $PR^{0.9}_v(G,b)$ \\
    \hline
    $v_1$ &  7.09 \\
    $v_2$ &  6.80 \\
    $v_3$ &  12.89 
\end{tabular}
\quad
\begin{tabular}{c|c|c}
    edge $e$  & $\phi(e)$ & $PR^{0.9}_e(G,b)$ \\
    \hline
    $e_2$ & $(v_1,v_4)$ & 3.55 \\
    $e_3$ & $(v_2,v_4)$ & 6.80 \\
    $e_7$ & $(v_3,v_4)$ & 3.22 
\end{tabular}
}
}
\end{table}
\end{example}


\section{Axioms}

In this section, we propose our axioms that are based on the axioms for (node) PageRank introduced in \cite{Was:Skibski:2020:pagerank}.
While some of our axioms are straightforward adaptations of the original axioms, some required significant modifications in order to work for an edge centrality.
What is important, is that these six adapted axioms now uniquely characterize Edge PageRank (what we prove in Theorem~\ref{theorem:main}).

The axioms are as follows:
\begin{itemize}
    \item \textbf{Node Deletion:} 
    For every graph $G = (V, E, \phi, b)$ and isolated node $u \in V$ it holds that
    \[
        F_e(V\setminus \{u\}, E, \phi, b|_{V\setminus\{u\}}) = F_e(G)
    \]
    for every $e \in E$.
    \item \textbf{Edge Deletion:} 
    For every graph $G = (V, E, \phi, b)$ and edge $e^* \e (u,v) \in E$ it holds that
    \[
        F_{e} (V, E \setminus \{e^*\}, \phi|_{E \setminus \{e^*\}}, b) =F_{e} (G)
    \]
    for every $e \e (w,w') \in E$ such that $w \not\in S_u(G) \cup \{u\}$.
    %
    \item \textbf{Edge Multiplication:}     
    For every graph $G = (V, E, \phi, b)$ and edge $e^* \e (u,v) \in E$ such that $E_u^+(G) \subseteq E_v^-(G)$ let $E' = E \setminus E_u^+(G) \cup \{ e^*\}$. Then, it holds that
    \[
        F_e(V, E', \phi|_{E'}, b) =
        \begin{cases}
            \deg^+_u(G) \cdot F_{e^*}(G) & \text{if } e = e^*, \\ 
            F_e(G) & \text{otherwise,}
        \end{cases}
    \]
    for every $e\in E'$.
    \item \textbf{Edge Swap:} 
    For every graph $G = (V, E, \phi, b)$ and edges $e_1 \ee (u_1,v_1), e_2 \ee (u_2,v_2) \in E$ such that $F_{e_1} (G) = F_{e_2}(G)$ it holds that
    \[
        F_e(V, E, \phi[e_1 \rightarrow (u_1,v_2), e_2 \rightarrow (u_2,v_1)], b)  = 
        F_e(G)
    \] 
    for every $e \in E$.
    \item\textbf{Node Redirect:} 
    For every graph $G = (V, E, \phi, b)$ and out-twins $u,w \in V$ with the bijection $\psi$
    it holds that
    \[
        F_e(R_{u\rightarrow w } (G)) =
        \begin{cases}
            F_{e} (G)+ F_{\psi (e)} (G)&\text{if } e \in E^+_w(G), \\
            F_e (G)&\text{otherwise,}
        \end{cases}
    \]
    for every $e\in E \setminus E^+_u(G)$.
    %
    %
    \item \textbf{Baseline:}
    For every graph $G = (V,E,\phi, b)$ and isolated edge $e \e (u,v) \in E$ it holds that $F_e(G) = b(u)$.
\end{itemize}

These six axiom uniquely characterize Edge PageRank.

\begin{theorem}
\label{theorem:main}
Edge centrality measure $F$ satisfies Node Deletion, Edge Deletion, Edge Multiplication, Edge Swap, Node Redirect, and Baseline, if and only if,  it is Edge PageRank.
\end{theorem}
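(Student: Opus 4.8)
The plan is to prove the two directions of the characterization separately. The easier direction is to verify that Edge PageRank $PR^a$ satisfies all six axioms; the harder direction is to show that any centrality measure $F$ satisfying the six axioms must coincide with $PR^a$.

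For the first direction, I would work from the random-walk interpretation, since it makes each axiom's effect transparent. \textbf{Baseline} follows because an isolated edge $e \e (u,v)$ is traversed exactly once by a surfer starting at $u$, and $u$ has weight $b(u)$; the normalization by the total node-weight in the definition cancels. \textbf{Node Deletion} is clear since an isolated node is never visited and contributes nothing to any traversal count. \textbf{Edge Deletion} uses the fact that the surfer can only traverse an edge $e$ leaving $w$ if she reaches $w$; deleting an edge $e^*$ out of $u$ cannot change the traversal count of edges leaving nodes $w \notin S_u(G)\cup\{u\}$, because no walk reaching such a $w$ ever passes through $e^*$. For \textbf{Edge Multiplication}, collapsing the out-edges of $u$ (all pointing into $v$) into a single edge $e^*$ concentrates the out-flow of $u$ onto $e^*$, multiplying its expected traversal count by $\deg^+_u(G)$ while leaving all other edges' counts unchanged; this should follow directly from the recursive equation~\eqref{eq:pagerank}. \textbf{Edge Swap} requires checking that swapping the endpoints of two equicentral edges leaves the whole fixed-point system invariant --- the cleanest argument is to verify the swapped values still solve~\eqref{eq:pagerank} and invoke uniqueness. \textbf{Node Redirect} should follow from the relation $PR^a_e(G)=PR^a_u(G)/\deg^+_u(G)$ together with the corresponding node-level behavior of PageRank when merging out-twins.

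For the converse direction, the core strategy is to show that the six axioms pin down $F$ on every graph by reducing an arbitrary graph to a ``canonical'' collection of isolated edges whose values are fixed by \textbf{Baseline}. I would argue by induction on a suitable complexity measure of the graph (e.g.\ number of edges, or a lexicographic measure combining edge count and structural depth). The reduction would proceed roughly as follows: first use \textbf{Edge Multiplication} to merge parallel-type out-edges and \textbf{Edge Deletion} to strip edges that provably do not influence the target edge's value; then use \textbf{Node Redirect} (merging out-twins) and \textbf{Edge Swap} (rearranging endpoints among equicentral edges) to normalize the structure into one where the recursive dependencies unfold along a tree-like or source-to-sink ordering; finally use \textbf{Node Deletion} to discard isolated nodes that arise during the reduction, terminating at isolated edges. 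At each reduction step, the axiom prescribes exactly how $F$-values transform, so tracking these transformations backward recovers $F_e(G)$ in terms of the base values, which must equal $PR^a_e(G)$ by the forward direction.

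The main obstacle, and the point the authors flag as where their proof departs from the node-PageRank argument, will be the converse direction's reduction scheme. Because edges are genuine first-class objects indexed independently of their endpoints, the interaction between \textbf{Edge Swap} (which needs a precomputed equality $F_{e_1}(G)=F_{e_2}(G)$ as a precondition) and the structural axioms is delicate: one must produce the required equicentralities \emph{before} one can apply the swap, so the induction cannot simply peel off edges in an arbitrary order. I expect the heart of the proof to be a carefully engineered normal form --- likely showing that every graph can be transformed, without changing the target value, into a graph where each node's out-edges are ``uniformized'' so that \textbf{Edge Multiplication} and \textbf{Node Redirect} apply cleanly --- together with a verification that the decay factor $a$ is forced to be consistent across the reduction (so that the characterization is really of the whole family $\{PR^a\}_{a\in[0,1)}$ rather than collapsing it). Establishing that the axioms are not only sufficient but that the reduction genuinely terminates and is well-defined regardless of the order of operations is where the technical weight of the argument lies.
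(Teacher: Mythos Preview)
Your forward direction is fine; the random-walk picture is a legitimate alternative to the paper's method of checking that the original values still solve the recursive equation~\eqref{eq:pagerank} in the modified graph and invoking uniqueness.

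The converse direction, however, has a genuine gap. You correctly flag the difficulty with \textbf{Edge Swap}: its precondition $F_{e_1}(G)=F_{e_2}(G)$ must be established \emph{before} the swap can be applied, and you offer no mechanism for producing such equalities. Your proposed induction-and-reduction scheme remains a wish rather than a plan: you do not say which complexity measure decreases, nor how to reach a state where swaps become available. The paper's proof does \emph{not} proceed by induction at all (it explicitly departs from the cycle-induction of the node-PageRank proof). Instead it shows directly that $F$ satisfies the recursive equation~\eqref{eq:pagerank} at every edge $e:(v,w)$ of every graph, via the following trick: for each incoming edge $e_i$ of $v$ with (unknown) value $x_i=F_{e_i}(G)$, adjoin a fresh isolated edge $e'_i:(u_i,v_i)$ with $b(u_i)=x_i$. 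By Locality plus Baseline, $F_{e'_i}=x_i=F_{e_i}$, so Edge Swap applies and one can replace all incoming edges of $v$ by source edges. Node Redirect then merges the $u_i$'s, and Edge Deletion/Node Deletion strips the rest of the graph, reducing to a small ``one-in, $k$-out'' star whose values are computed explicitly.

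A second missing ingredient is the extraction of the decay factor. You say the characterization is of the family $\{PR^a\}_{a\in[0,1)}$ and that ``$a$ is forced to be consistent,'' but give no mechanism. The paper obtains $a_F$ from the two-edge path via a Cauchy functional-equation argument (Node Redirect gives additivity in the source weight, non-negativity gives linearity), and then bounds $a_F<1$ by a separate argument using Edge Swap on a self-loop. Without these steps there is no candidate $a$ to compare $F$ against, and your reduction to isolated edges (whose values are $b(u)$, independent of $a$) cannot by itself recover the decay factor.
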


Our first two axioms describe when the centrality of an edge is not affected by a removal of an element of a graph, a node or an edge.
More in detail, \emph{Node Deletion} states that removing an isolated node does not affect the centrality of any edge.
In turn, \emph{Edge Deletion} says that removing an edge does not affect the centralities of edges that cannot be reached from the start of the removed edge.
Both axioms capture the intuition that the importance of an edge should not be affected by the parts of a network that are disconnected from it. 
They are direct adaptations of the axioms proposed in \cite{Was:Skibski:2020:pagerank}, only there, the axioms state that the centralities of respective nodes are not affected.

The next axiom, \emph{Edge Multiplication}, has been more significantly modified.
The original Edge Multiplication is focused on a node---it states that creating additional copies of all outgoing edges of a node does not affect the centralities of any node in a graph.
Here, we modify the axiom so that it focuses on an edge instead.
We say that if an edge is the only outgoing edge of a node, then creating $k-1$ additional copies of this edge, divides its centrality by $k$ and does not affect the centrality of other edges.
For the sake of notational convenience, in the formulation of the axiom we consider removing these additional copies of an edge, but the meaning is equivalent.
Intuitively, the axiom means that for a centrality measure the absolute number of edges is not important.

Our next axiom is \emph{Edge Swap}.
Here, again, the axiom is quite different from the one introduced for (node) PageRank.
In its original wording, Edge Swap considers swapping the ends of outgoing edges of two nodes with equal out-degrees and equal centralities.
Thus, this condition could not be included in our axiom since the centrality of a node is not defined.
To solve this issue, we require only that the swapped edges have equal centrality.
In this way, we obtain a simpler axiom: if two edges have equal centralities, swapping their ends does not affect the centrality of any edge.
It captures the key property of feedback centralities:
that your importance depends only on the importance of your incoming edges.

Intuitively, our next axiom, \emph{Node Redirect}, says that two nodes that are identical (with regards to outgoing edges) can be redirected to each other without affecting the centralities in a network.
It entails the same intuition as Node Redirect in \cite{Was:Skibski:2020:pagerank}, but the construction is a little more subtle.
The original axiom considered redirecting a node into its out-twin and stated that this operation sums up the centralities of the out-twins and does not affect the centralities of other nodes.
Here, we say that since there is a one-to-one correspondence between the outgoing edges of out-twins, their redirection should sum up the centralities of corresponding edges.
Also, the centralities of other edges in a graph should not be affected.

Finally, the original \emph{Baseline} axiom stated that centrality of an isolated node is equal to its weight.
The intuition behind it was that an isolated node is not influenced by the topology of a graph, so its centrality should be equal to its base importance that is reflected in its weight.
Hence, in our version of Baseline we consider an isolated edge 
and say that its centrality is equal to the weight of its start.

We conclude this section with an example that shows how our axiomatization can be used to decide whether Edge PageRank can be used in a particular application.

\begin{example}
Consider a network of flight connections between airports, in which an edge from airport A to airport B represents a single flight from A to B.
We can think about the importance of each connection on the global virus spread, as considered by~\citet{Marcelino:Kaiser:2012}.
Imagine that there is an airport, A, from which there are flights to only one other airport, B.
In such a case, Edge Multiplication implies that the change in the number of flights from A to B does not affect the importance of any edge.
However, additional flights from A to B increase the chance of virus spread from A to B, which in turn makes it more probable that flights outgoing from B would spread the virus.
Therefore, Edge Multiplication is not an adequate axiom in such a setting.
This implies that since Edge PageRank satisfies Edge Multiplication, it should not be used in this application.
\end{example}

\section{Proof of Uniqueness}
In this section, we sketch the proof of our main result that Edge PageRank is uniquely characterized by our axioms (Theorem~\ref{theorem:main}).
The full proof can be found in the appendix.

We begin by noting that Edge PageRank indeed satisfies all six axioms.
\begin{lemma}
\label{lemma:axioms}
Edge PageRank satisfies Node Deletion, Edge Deletion, Edge Multiplication, Edge Swap, Node Redirect, and Baseline.
\end{lemma}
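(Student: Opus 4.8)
The plan is to verify each of the six axioms separately, relying on two complementary descriptions of Edge PageRank: its characterization as the \emph{unique} solution of the recursive system~\eqref{eq:pagerank}, and its random-walk interpretation as the expected number of edge traversals weighted by the total node weight. The \emph{Baseline} axiom is immediate: for an isolated edge $e\e(u,v)$ we have $E^-_u(G)=\emptyset$ and $\deg^+_u(G)=1$, so \eqref{eq:pagerank} reduces directly to $PR^a_e(G)=b(u)$.

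For \emph{Edge Swap}, \emph{Edge Multiplication} and \emph{Node Redirect} I would use the same template. In each case the axiom transforms $G$ into a new graph $G'$; I propose an explicit assignment of values to the edges of $G'$, built from the numbers $PR^a_\bullet(G)$ according to the right-hand side of the axiom, and then check that this assignment solves the recursive system~\eqref{eq:pagerank} for $G'$. Uniqueness then forces it to equal $PR^a(G')$. Since the equation for an edge starting at $x$ involves only $\deg^+_x$, $b(x)$ and the sum of the values over $E^-_x$, it suffices in each case to match these three quantities. For Edge Swap the out-degrees and weights are untouched, and the hypothesis $PR^a_{e_1}(G)=PR^a_{e_2}(G)$ guarantees that replacing $e_1$ by $e_2$ in $E^-_{v_1}$ (and symmetrically at $v_2$) leaves every incoming sum unchanged. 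For Edge Multiplication collapsing the parallel edges of $u$ changes only $\deg^+_u$; one checks that the surviving edge $e^*$ inherits $\deg^+_u(G)$ times its old value, that the incoming sum at $v$ is preserved because those collapsed edges all shared that common value, and that every other equation is untouched. For Node Redirect the key computation is that at the merged node $w$ the incoming sum becomes $\sum_{E^-_w(G)}+\sum_{E^-_u(G)}$ and the weight becomes $b(u)+b(w)$, which is exactly the sum of the two equations of $G$ for an edge of $E^+_w(G)$ and its $\psi$-partner in $E^+_u(G)$; while at every other node the values carried by the surviving $w$-edges, each boosted by its removed $u$-twin, exactly reconstitute the original incoming sum.

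The two deletion axioms, \emph{Node Deletion} and \emph{Edge Deletion}, I would instead establish through the random-walk interpretation, since they express invariance rather than a linear recombination. For Node Deletion, a walk starting at an isolated node $u$ terminates immediately and traverses no edge, so $u$ contributes $0$ to every edge's expected traversal count; deleting $u$ affects neither the behaviour at the remaining nodes nor the total weight, so every edge keeps its value. For Edge Deletion the argument is a coupling: run the walk on $G$ and on $G$ with $e^*$ removed using shared randomness up to the first visit to $u$. Before that moment the two walks coincide, so they traverse any fixed edge $e\e(w,w')$ the same number of times; after reaching $u$ the walk stays within $S_u(G)\cup\{u\}$, which by the hypothesis $w\notin S_u(G)\cup\{u\}$ never contains $w$, hence $e$ is never traversed again in either process. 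Thus the expected traversal count of $e$, and so $PR^a_e$, is unchanged.

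I expect the main obstacle to be \emph{Node Redirect}: the bookkeeping of the incoming-edge multisets when $u$ and $w$ are mutually adjacent or carry self-loops is delicate, and one must verify carefully that the boost transferred onto each surviving edge of $E^+_w(G)$ through the bijection $\psi$ compensates exactly for the deleted twin edge of $E^+_u(G)$ at \emph{every} node simultaneously. The coupling in Edge Deletion also requires some care to make fully rigorous, in particular to handle walks that never reach $u$ and the weighting by $b$, but the underlying idea is straightforward.
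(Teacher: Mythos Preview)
Your proposal is correct, and for \emph{Baseline}, \emph{Edge Swap}, \emph{Edge Multiplication} and \emph{Node Redirect} it follows exactly the paper's approach: propose a candidate assignment on the modified graph $G'$, check that it satisfies the recursive system~\eqref{eq:pagerank}, and conclude by uniqueness.

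Where you diverge is in the two deletion axioms. The paper treats \emph{Node Deletion} and \emph{Edge Deletion} with the same recursive-verification template; for Edge Deletion this is nontrivial, because the axiom only prescribes values on the edges with start outside $S_u(G)\cup\{u\}$, so to obtain a full candidate one must \emph{construct} values for the remaining edges. The paper does this by passing to an auxiliary induced subgraph $G''$ on $S_u(G)\cup\{u\}$ with suitably inflated base weights, computes $PR^a$ there, and then checks the glued assignment satisfies~\eqref{eq:pagerank} on all of $G'$. Your coupling argument sidesteps this construction entirely: it shows directly that the relevant expectation is unchanged, which is cleaner for these two axioms precisely because they assert pure invariance rather than a linear recombination. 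The trade-off is that the paper's proof is self-contained within the algebraic definition, whereas yours invokes the probabilistic interpretation as an additional ingredient.

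One small slip to fix: in your Node Deletion sketch you write that deleting $u$ ``affects neither the behaviour at the remaining nodes nor the total weight''. The total weight \emph{does} drop by $b(u)$; what makes the argument work is that $PR^a_e(G)$ equals $\sum_{s\in V} b(s)\cdot\mathbb{E}[\text{traversals of }e\mid\text{start at }s]$, and the summand for $s=u$ vanishes because an isolated start traverses no edge. Stating it this way also avoids any issue with the renormalized starting distribution.
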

We skip the proof of Lemma~\ref{lemma:axioms} and instead focus on proving that an arbitrary centrality measure $F$ satisfying our axioms is indeed Edge PageRank.
To this end, we adopt the following structure of the proof:

\begin{itemize}

\item First, we prove simple properties of $F$ (Proposition~\ref{prop:loc-sw-se}). Specifically, we show that the centrality of an edge depends only on its connected component (\emph{Locality}), does not depend on weights of sinks (\emph{Sink Weight}), and if the edge is a source edge, then it is equal to the weight of its start (\emph{Source Edge}).

\item Then, we analyze simple graphs in which all edges are incident to one node, $v$. 
\begin{itemize}
\item We begin with graphs in which $v$ has one incoming edge (from $u$) and one outgoing edge (to $w$).
First, we show that there exists a constant $a_F$ such that the centrality of the outgoing edge of $v$ equals $a_F$ times $b(u)$ plus $b(v)$ (Lemma~\ref{lemma:two_path_weighted}). Then, we prove that $a_F \in [0,1)$ (Lemma~\ref{lemma:a_f}).

\item Furthermore, we consider graphs in which node $v$ has $k$ outgoing and zero (Lemma~\ref{lemma:star}) or one (Lemma~\ref{lemma:star_edge}) incoming edges.
\end{itemize}    
\item Finally, we prove that for arbitrary graph, $F$ is equal to Edge PageRank with the decay factor $a_F$ (Lemma~\ref{lemma:any_graph}).

\end{itemize}

We note that there are significant differences between our proof and the proof of unique characterization of (node) PageRank introduced in~\cite{Was:Skibski:2020:pagerank}.
More in detail, the main axis of the proof for (node) PageRank is the induction on the number of cycles in a graph.
Here, we follow a different path and prove that centrality $F$ satisfying all our axioms satisfies also PageRank recursive equation (Equation~\eqref{eq:pagerank}).
Since this equation uniquely defines PageRank, this implies that $F$ is indeed PageRank.
In effect, the proof obtained in this paper is notable simpler.

We begin with a proposition that captures simple properties of an edge centrality measure implied by our axioms.
\begin{proposition}
\label{prop:loc-sw-se}
If edge centrality measure $F$ satisfies Node Deletion, Edge Deletion, Node Redirect, and Baseline, then for every graph $G = (V, E,\phi, b)$  and $e \in E$ it holds that
\begin{itemize}
    \item[(a)] (Locality) $F_e(G)=F_e(G + G')$ for every $G'$ disjoint with $G$,
    \item[(b)] (Sink Weight) $F_e(G)=F_e(V,E,\phi,b[w \rightarrow 0])$ for every sink $w \in V$,
    \item[(c)] (Source Edge) $F_e(G)=b(u)$ if $e \e (u,v)\in E$ is a source edge.
\end{itemize}
\end{proposition}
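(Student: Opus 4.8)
The plan is to prove each of the three parts by rewriting the graph into a canonical form that leaves $F_e$ unchanged, exploiting that every axiom is an equation and may therefore be applied in either direction. I would treat part (a) first, then use it as a building block for part (b), and finally prove part (c) directly from Edge Deletion, Node Deletion, and Baseline.

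For \textbf{(a) Locality}, I would dismantle $G'$ entirely. Order its edges arbitrarily and remove them one at a time with Edge Deletion. The key observation is that since $G$ and $G'$ share no nodes and no edge joins them, the start $x$ of any edge of $G'$ can reach only nodes of $G'$; hence for every $e \e (w,w') \in E$ we have $w \notin S_x(G)\cup\{x\}$, so Edge Deletion preserves $F_e$. This property is maintained after each removal, since deleting edges only shrinks successor sets. Once all edges of $G'$ are gone its nodes are isolated, and Node Deletion removes them one by one without altering any centrality; the surviving graph is exactly $G$, giving $F_e(G+G')=F_e(G)$. This part needs only Node Deletion and Edge Deletion.

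For \textbf{(b) Sink Weight}, I would use that any two sinks are out-twins via the empty bijection, so Node Redirect between them preserves all centralities (the relevant set $E^+_w(G)$ is empty, so only the ``otherwise'' case arises). Write $G_0$ for the graph with $b(w)$ replaced by $0$. By Locality I may adjoin to $G_0$ a disjoint isolated node $u$ of weight $b(w)$ without changing $F_e$. As $u$ and $w$ are both sinks, redirecting $u \to w$ transfers the weight $b(w)$ onto $w$ and deletes $u$; because $u$ has no incoming edges, the outcome is precisely $G$. Node Redirect then yields $F_e(G)=F_e(G_0+\{u\})=F_e(G_0)$, which is the claim. The mild subtlety here is that Node Redirect is used to manufacture the sink weight rather than to merge one away.

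For \textbf{(c) Source Edge}, with $e \e (u,v)$ a source edge I would delete every other edge of the graph. Since $u$ has no incoming edge, no node can reach $u$, so $u \notin S_x(G)$ for every $x$, and $e$ is the only edge whose start is $u$; hence Edge Deletion may remove any $e^{*}\neq e$ while preserving $F_e$, and this remains valid through the iterated deletions. After all edges but $e$ are removed, every node other than $u,v$ is isolated and is deleted by Node Deletion; in the resulting graph $E^{\pm}_u=E^{\pm}_v=\{e\}$, so $e$ is an isolated edge and Baseline gives $F_e(G)=b(u)$. The main obstacle throughout is bookkeeping the precondition of Edge Deletion---that the start $u$ of the tracked edge is neither the start of, nor a successor of the start of, the edge being deleted; this is secured by disjointness in (a) and by the absence of incoming edges at $u$ in (c), and in both cases one must verify it survives every step of the deletion process, not merely the first.
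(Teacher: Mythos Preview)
Your proposal is correct and follows essentially the same route as the paper: dismantle $G'$ by Edge Deletion then Node Deletion for (a); adjoin an isolated node carrying the missing weight and redirect it into the sink for (b); strip all other edges and invoke Baseline for (c). The only cosmetic difference is that in (c) you remove the leftover isolated nodes before applying Baseline, whereas the paper observes directly that once $e$ is the sole edge it is already isolated (regardless of how many stray nodes remain), so Baseline applies without the extra Node Deletion step.
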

\begin{proof}[Proof (Sketch)]
    For (a), observe that
    from Edge Deletion, removing all edges from $G'$ in graph $G+G'$ does not affect the centrality of $e$.
    In resulting graph, all nodes from $G'$ are isolated.
    Thus, the thesis follows from Node Deletion.
    
    For (b), consider graphs
    $G''=(V,E,\phi,b[w \rightarrow 0])$
    and
    $G'=G'' + (\{w'\},\emptyset,\phi|_\emptyset,[b(w)])$.
    Then, $F_e(G'')=F_e(G')$ from \emph{(a)} and $F_e(G') = F_e(G)$ from Node Redirect since graph $G$ is graph $G'$ with $w'$ redirected into $w$.
    
    For (c), since $e$ is a source edge, then by Edge Deletion, removing all other edges does not affect its centrality.
    In result, $e$ is isolated, hence Baseline yields the thesis.
\end{proof}

In the following lemma, we introduce constant $a_F$ that, as we will prove later on, is decay factor of the Edge PageRank to which $F$ is equal to.

\begin{lemma}\label{lemma:two_path_weighted}
If edge centrality measure $F$ satisfies Node Deletion, Edge Deletion, Node Redirect, and Baseline, then there exists a constant, $a_F \in \mathbb{R}$, such that for every graph
$G \!=\! (\{u, v, w\},  \{e_1 \ee (u,v), e_2 \ee (v, w)  \},  [x, y, 0])$ it holds that
\[
    F_{e_1} (G) = x \quad \mbox{and} \quad F_{e_2} (G) = a_F \cdot x + y.
\]
\end{lemma}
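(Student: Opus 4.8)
The plan is to write $G_{x,y}$ for the path graph $(\{u,v,w\}, \{e_1 \e (u,v), e_2 \e (v,w)\}, [x,y,0])$ from the statement, dispatch $F_{e_1}$ immediately, and then determine $F_{e_2}$ by separating the contribution of the middle weight $y$ (which should enter with coefficient $1$) from the contribution of the upstream weight $x$ (which should enter with some fixed coefficient $a_F$). The value of $F_{e_1}$ is the easy half: in $G_{x,y}$ node $u$ has no incoming edge and $e_1$ as its only outgoing edge, so $e_1$ is a source edge and the Source Edge property (Proposition~\ref{prop:loc-sw-se}(c)) gives $F_{e_1}(G_{x,y}) = b(u) = x$.

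To peel off the weight $y$, I would introduce an auxiliary graph $K$ on nodes $u,v,v',w$ with edges $e_1 \e (u,v)$, $e_2 \e (v,w)$ and an extra edge $e_2' \e (v',w)$, and weights $b(u)=x$, $b(v)=0$, $b(v')=y$, $b(w)=0$. Since $v$ and $v'$ are out-twins, redirecting $v$ into $v'$ reconstructs $G_{x,y}$ (with $v'$ as the middle node and $e_2'$ as its outgoing edge), so Node Redirect yields $F_{e_2}(G_{x,y}) = F_{e_2'}(K) + F_{e_2}(K)$. Here $F_{e_2'}(K)=y$ because $e_2'$ is a source edge of $K$, while $F_{e_2}(K) = F_{e_2}(G_{x,0})$ because the $v'$-branch detaches via Edge Deletion and then vanishes via Node Deletion without touching $e_2$. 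This gives the clean identity $F_{e_2}(G_{x,y}) = y + F_{e_2}(G_{x,0})$, reducing everything to understanding $g(x) := F_{e_2}(G_{x,0})$.

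The hard part is proving that $g$ is \emph{additive}: unlike the node weight $y$, the upstream weight $x$ cannot be realized as a source edge and peeled off directly, so the source-edge trick above does not apply. The key construction I would use is the graph $M$ consisting of two paths $u_1 \to v_1 \to w$ and $u_2 \to v_2 \to w$ sharing only the sink $w$, with source weights $x_1,x_2$ and all other weights $0$; call the two middle edges $f_1 \e (v_1,w)$ and $f_2 \e (v_2,w)$. On one hand, detaching either branch by Edge Deletion and Node Deletion shows $F_{f_i}(M) = F_{e_2}(G_{x_i,0})$. On the other hand, $v_1,v_2$ are out-twins and so are (afterwards) $u_1,u_2$, so redirecting $v_1 \to v_2$ and then $u_1 \to u_2$ transforms $M$ into $G_{x_1+x_2,0}$; tracking $e_2$ through these two applications of Node Redirect gives $F_{e_2}(G_{x_1+x_2,0}) = F_{f_1}(M) + F_{f_2}(M)$. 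Combining the two computations yields $g(x_1+x_2) = g(x_1) + g(x_2)$.

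Finally, since $F$ is non-negative, $g$ is an additive, non-negative function on $\mathbb{R}_{\ge 0}$, hence non-decreasing, hence linear; writing $a_F := F_{e_2}(G_{1,0})$ we get $g(x) = a_F \cdot x$, and therefore $F_{e_2}(G_{x,y}) = a_F x + y$, with $a_F$ depending only on $F$. I expect the construction $M$ to be the main obstacle: it is what substitutes for the homogeneity one would get for free from a node weight, and it is the one place where the two Node Redirect moves and the two Edge/Node Deletion detachments must be arranged so that precisely the target edge $e_2$ is tracked consistently through every transformation.
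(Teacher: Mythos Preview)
Your argument is essentially the paper's: both dispatch $F_{e_1}$ via Source Edge, both reduce general $y$ to $y=0$ via the auxiliary graph on $\{u,v,v',w\}$ with an extra edge $(v',w)$ and a redirect between the out-twins $v,v'$, and both obtain additivity of $g(x)=F_{e_2}(G_{x,0})$ by merging two copies of the $2$-path through successive Node Redirects (your $M$ with a shared sink is precisely the paper's $G+G'$ after its first redirect $w'\!\to w$, so the constructions differ by one step). The only cosmetic difference is the order: you peel off $y$ first, the paper does it last.

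One point to tighten: you redirect $v$ into $v'$, so the surviving graph has middle node $v'$ and second edge $e_2'$, and Node Redirect literally gives you $F_{e_2'}$ of a \emph{relabelled} $2$-path, not $F_{e_2}(G_{x,y})$; the same relabelling shows up when you detach a branch of $M$ and when you collapse $M$ to a single path. You silently identify these with $G_{x,y}$ and with $g(x_i)$, which presupposes label-invariance of $F$ on $2$-paths --- exactly what needs proving, since the lemma asserts a single constant $a_F$ independent of labels. The paper avoids the first instance by redirecting $v'$ into $v$ (so the result is literally $G_{x,y}$), and for additivity it derives label-independence from the merge identity itself: from $g(x+x')=g(x)+g'(x')$ one sets $x'=0$ to get $g'(0)=0$, invokes arbitrariness of labels to get $g(0)=0$, then sets $x=0$ to get $g'=g$. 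Either fix (reverse the redirect direction in step~2; insert the $g'(0)=0\Rightarrow g(0)=0\Rightarrow g'=g$ chain in step~3) closes your gap.
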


\begin{proof}[Proof (Sketch)]
First equation follows directly from Source Edge (Proposition~\ref{prop:loc-sw-se}c).
Thus, we will focus on proving that $F_{e_2} (G) = a_F \cdot x + y$.
First, assume $y=0$ and consider graph
$G' \!\!=\! ( \{u'\!, v'\!, w'\}, \! \{e'_1 \ee (u' \! , v'), e'_2 \ee (v'\!, w')  \}, \! [x'\!, 0, 0])$
such that $u',v',w' \not \in \{u,v,w\}$
(see Figure~\ref{fig:two_path_weighted}).
By Locality (Proposition~\ref{prop:loc-sw-se}a), we know that in $G+G'$ the centrality of all edges is the same as in $G$ or $G'$.
In graph $G+G'$, we sequentially redirect nodes $w',v',u'$ into $w,v,u$ respectively, to obtain graph
$G'' \! \!=\! ( \{u,\! v,\! w\}, \! \{e_1 \ee (u,v), e_2 \ee (v, w)  \}, \! [x + x'\! , 0, 0])$,
which is exactly $G$, but with different weight of node $u$.
Since each time we redirected a node into its out-twin, from Node Redirect, we get that 
\[
    F_{e_2}(G'')=F_{e_2}(G)+F_{e'_2}(G').
\]
From this and the arbitrariness of the choice of nodes and edges, we conclude that there is a function, $f : \mathbb{R}_{\ge 0} \rightarrow \mathbb{R}_{\ge 0}$, such that $F_{e_2}(G)=f(x)$ and that $f$ is additive.
Since it is also non-negative, we get $f(x)=a_F \cdot x$~\cite{Cauchy:1821}.

\begin{figure}[t]
\centering
\begin{tikzpicture}[x=4cm,y=4cm] 
  \tikzset{     
    e4c node/.style={circle,draw,minimum size=0.6cm,inner sep=0},
    e4c edge/.style={sloped,above,font=\footnotesize},
    e4cn edge/.style={above,font=\footnotesize, pos=0.6}
  }
  
  \def\x{0};
  \def\y{0};
  \node (G) at (\x-0.15, \y+0.4) {$G$}; 
  \node (G_) at (\x-0.15, \y+0.1) {$G'$}; 
  \node[e4c node] (1) at (\x+0, \y+0.3) {$u$}; 
  \node[e4c node] (2) at (\x+0.3, \y+0.3) {$v$}; 
  \node[e4c node] (3) at (\x+0.6, \y+0.3) {$w$}; 
  \node[e4c node] (4) at (\x+0, \y+0) {$u'$}; 
  \node[e4c node] (5) at (\x+0.3, \y+0) {$v'$}; 
  \node[e4c node] (6) at (\x+0.6, \y+0) {$w'$}; 

  \path[->,draw,thick]
  (1) edge[e4c edge] node  {$e_1$}  (2) 
  (2) edge[e4c edge] node  {$e_2$}  (3)
  (4) edge[e4c edge] node  {$e'_1$} (5)
  (5) edge[e4c edge] node  {$e'_2$} (6)
  ;
  
  \def\x{1.1};
  \node (G) at (\x-0.15, \y+0.4) {$G^*$}; 
  \node[e4c node] (1) at (\x+0, \y+0.3) {$u$}; 
  \node[e4c node] (2) at (\x+0.3, \y+0.3) {$v$}; 
  \node[e4c node] (3) at (\x+0.6, \y+0.3) {$w$}; 
  \node[e4c node] (5) at (\x+0.3, \y+0) {$v'$};

  \path[->,draw,thick]
  (1) edge[e4c edge] node {$e_1$} (2) 
  (2) edge[e4c edge] node  {$e_2$}  (3)
  (5) edge[e4cn edge, pos=0.6, below] node [yshift=-2] {$e_3$}  (3)
  ;

\end{tikzpicture}
\caption{An illustration to the proof of Lemma~\ref{lemma:two_path_weighted}.}
\label{fig:two_path_weighted}
\end{figure}
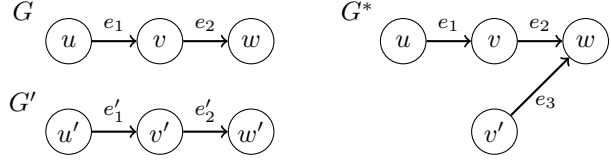

Now, if $y$ is not necessarily 0, then we consider graph
$G^* \! \!=\! ( \{u,\! v,\! v' \!,\! w\}, \! \{e_1 \ee (u,\! v), e_2 \ee (v,\! w), e_3 \ee (v' \!,\! w)  \}, \! [x, 0, y, 0])$ (see Figure~\ref{fig:two_path_weighted}).
Node $v$ is not a successor of $v'$, hence from Edge Deletion, Node Deletion, and the first part of the proof, we get that $F_{e_2}(G^*)=a_F \cdot x$.
Also, $F_{e_3}(G^*)=y$ from Source Edge (Proposition~\ref{prop:loc-sw-se}c).
Thus, since redirecting $v'$ into $v$ in $G^*$ results in $G$, Node Redirect yields the thesis.
\end{proof}

Now, let us show the bounds for constant $a_F$.


  



\begin{lemma}\label{lemma:a_f}
If edge centrality measure $F$ satisfies Node Deletion, Edge Deletion, Edge Swap, Node Redirect, and Baseline, then $a_F \in [0,1)$.
\end{lemma}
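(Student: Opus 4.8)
The plan is to first dispose of the lower bound and then spend the real effort on $a_F<1$. That $a_F\ge 0$ is immediate: by Lemma~\ref{lemma:two_path_weighted} we have $F_{e_2}(G)=a_F\cdot x$ when $y=0$, and since $F$ is non-negative while $x$ ranges over the positive reals, the slope $a_F$ cannot be negative. So everything reduces to ruling out $a_F\ge 1$. My strategy is to extract from the axioms an \emph{exact self-consistency equation} for the centrality of a single self-loop and then let non-negativity do the work: a self-loop is the simplest cyclic structure, and it is exactly where a decay factor $\ge 1$ should blow up.

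Concretely, I would fix $y>0$, consider the graph $K_y=(\{v,v'\},\{e_1\e(v,v'),\,e_2\e(v',v')\},[y,0])$, and write $c:=F_{e_2}(K_y)$ for the (non-negative, finite) centrality of the self-loop $e_2$. The goal is the identity $c=a_F\,(y+c)$. Two ingredients feed into it. First, a \emph{two-source recursion}: for a node fed by two source edges of start-weights $y$ and $c$ and having a single outgoing edge $h$, one gets $F_h=a_F\,(y+c)$; this follows by noting that the two source tails are out-twins, redirecting one into the other (Node Redirect) to collapse them into a single source of weight $y+c$, and reading off the value of $h$ from Lemma~\ref{lemma:two_path_weighted}. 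Second---and this is the crux---I would \emph{open up} the self-loop of $K_y$ using Edge Swap. Add, disjointly, an isolated source edge $q\e(s,t)$ whose start carries weight exactly $c$; by Locality (Proposition~\ref{prop:loc-sw-se}a) and Source Edge (Proposition~\ref{prop:loc-sw-se}c) the edges $e_2$ and $q$ then have equal centrality $c$, so Edge Swap applies to them. Swapping their heads turns $e_2\e(v',v')$ into $e_2\e(v',t)$ and $q\e(s,t)$ into $q\e(s,v')$, i.e.\ it replaces the self-loop at $v'$ by an ordinary edge to a sink while feeding $v'$ with a fresh source edge of weight $c$, and crucially leaves $F_{e_2}$ unchanged. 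In the new graph $v'$ is fed by the two source edges $e_1$ (weight $y$) and $q$ (weight $c$), so the two-source recursion gives $F_{e_2}=a_F(y+c)$, whence $c=a_F(y+c)$.

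From $c\,(1-a_F)=a_F\,y$ I conclude immediately: if $a_F=1$ this reads $0=y$, contradicting $y>0$; if $a_F>1$ it gives $c=a_F y/(1-a_F)<0$, contradicting $c\ge 0$. Either way $a_F\ge 1$ is impossible, so $a_F\in[0,1)$. I expect the main difficulty to be precisely the Edge-Swap step: one must take the weight of the auxiliary source to equal the \emph{a priori unknown} quantity $c=F_{e_2}(K_y)$ (legitimate exactly because $F$ is non-negative, so $c$ is a bona fide weight) and verify that the swap keeps $e_2$'s centrality intact while transforming the cyclic neighbourhood of $v'$ into one covered by Lemma~\ref{lemma:two_path_weighted}; the two-source recursion itself is routine given Lemma~\ref{lemma:two_path_weighted} and Node Redirect.
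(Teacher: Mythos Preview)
Your argument is correct and follows essentially the same strategy as the paper: introduce a self-loop, adjoin an auxiliary source edge whose start-weight is chosen to equal the (finite, non-negative) centrality of the self-loop, apply Edge Swap to unroll the loop, and then read off a self-consistency equation from Lemma~\ref{lemma:two_path_weighted} that forces $a_F<1$ by non-negativity.

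The only difference is a mild detour. The paper places the weight directly on the self-loop node: it takes $G=(\{u,v,w\},\{e_1\e(u,w),\,e_2\e(v,v)\},[x,1,0])$, sets $x=F_{e_2}(G)$, swaps, and lands immediately in the two-path of Lemma~\ref{lemma:two_path_weighted}, obtaining $c=1+a_F c$. You instead set the self-loop node's weight to $0$ and feed it externally via $e_1$ from a node of weight $y$; after the swap this leaves $v'$ with \emph{two} incoming source edges, so you need an extra Node Redirect to merge them before invoking Lemma~\ref{lemma:two_path_weighted}, arriving at $c=a_F(y+c)$. Both equations yield $a_F<1$ equally well; your route just uses one more axiom application than necessary. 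If you drop $v$ and $e_1$ and put weight $y$ directly on $v'$, your construction collapses to the paper's.
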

\begin{proof}[Proof (Sketch)]
Inequality $a_F \ge 0$ follows directly from Lemma~\ref{lemma:two_path_weighted} and the fact that centrality is non-negative.
Thus, let us focus on showing that $a_F<1$.
To this end, consider graph
$G=(\{u,v,w\},\{e_1 \ee (u,w), e_2 \ee (v,v) \},[x,1,0])$ (see Figure~\ref{fig:a_f}).
From Locality (Proposition~\ref{prop:loc-sw-se}a), $F_{e_2}(G)$ does not depend on $x$.
If we take $x = F_{e_2}(G)$, then from Source Edge (Proposition~\ref{prop:loc-sw-se}c) $F_{e_1}(G)= x = F_{e_2}(G)$.
In such case, from Edge Swap, exchanging the ends of these two edges does not affect their centralities.
Hence, in resulting graph,
$G'=(\{u,v,w\},\{e_1 \ee (u,v), e_2 \ee (v,w) \},[x,1,0])$, we have
$F_{e_2}(G')=F_{e_2}(G)$.
On the other hand, from Lemma~\ref{lemma:two_path_weighted}, we have
$F_{e_2}(G')=1 + a_F \cdot x$.
We took $x=F_{e_2}(G)$, thus
\(
    F_{e_2}(G) = F_{e_2}(G')= 1 + a_F \cdot F_{e_2}(G),
\)
which implies that
\(
    F_{e_2}(G) = 1/(1-a_F).
\)
Since centrality is non-negative, we get $a_F<1$.
\end{proof}

In Lemmas~\ref{lemma:star} and \ref{lemma:star_edge} we consider graphs in which $v$ has multiple outgoing edges.
First, we assume that it does not have any incoming edge.

\begin{lemma}\label{lemma:star}
If edge centrality measure $F$ satisfies Node Deletion, Edge Deletion, Edge Multiplication, Node Redirect, and Baseline, then for every $k\in \mathbb{N}$ and graph 
$G \!=\! (\!\{v, w_1,\dots, w_k\},\! \{e_1 \ee (v,w_1), \dots, e_k \ee (v,w_k) \!\}, \! [x, 0, \dots, 0])$,
it holds that
\[
    F_{e_i}(G) = x / k \quad \mbox{for every } i\in \{1,\dots, k\}.
\]
\end{lemma}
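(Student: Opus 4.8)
The plan is to reduce $G$ to a single source edge by combining \emph{Node Redirect} and \emph{Edge Multiplication}, and then to read off the value from \emph{Source Edge} (Proposition~\ref{prop:loc-sw-se}c). Edge Multiplication is the natural tool here, since it is precisely the axiom that relates a bundle of parallel outgoing edges of a node to a single one and divides the centrality by their number. The obstacle is that Edge Multiplication applies only when all outgoing edges of a node share a common endpoint, whereas in $G$ the edges $e_1, \dots, e_k$ point to the $k$ \emph{distinct} sinks $w_1, \dots, w_k$. So the first task is to merge these endpoints into one while preserving every centrality.

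First I would collapse all the sinks into one. Any two of the nodes $w_1, \dots, w_k$ are out-twins: each has an empty set of outgoing edges, so the empty bijection vacuously satisfies the out-twin condition. Hence I can apply Node Redirect to redirect $w_j$ into $w_1$ successively for $j = 2, \dots, k$. In each such step the redirected node is a sink, so the ``summing'' case of Node Redirect (which affects only edges in $E^+_{w_1}$, itself empty) is vacuous, and every centrality is preserved. After these redirects each $e_i$ points to $w_1$, the weight of $w_1$ remains $0$, and I obtain the graph $H = (\{v, w_1\}, \{e_1, \dots, e_k\}, [x,0])$ with all $e_i \e (v,w_1)$ and $F_{e_i}(G) = F_{e_i}(H)$ for every $i$.

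Next I would apply Edge Multiplication inside $H$. Now all outgoing edges of $v$ end in $w_1$, so $E^+_v(H) \subseteq E^-_{w_1}(H)$ and the hypothesis of the axiom holds with $e^* = e_i$. Reducing to the single edge $e_i$ gives $F_{e_i}(\{v,w_1\}, \{e_i \e (v,w_1)\}, [x,0]) = \deg^+_v(H)\cdot F_{e_i}(H) = k\cdot F_{e_i}(H)$. In the reduced one-edge graph $e_i$ is a source edge, since its start $v$ has no incoming edge and exactly one outgoing edge, so Source Edge yields $F_{e_i} = b(v) = x$. Combining the two equalities gives $k\cdot F_{e_i}(H) = x$, hence $F_{e_i}(G) = F_{e_i}(H) = x/k$; as $i$ was arbitrary, this proves the claim.

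The main obstacle, and the only genuinely delicate point, is the first step: verifying that merging the distinct sinks is centrality-preserving. This rests on the vacuous cases of the out-twin definition and of Node Redirect for sink nodes. Once the reduction to a common endpoint is in place, Edge Multiplication and Source Edge finish the argument mechanically.
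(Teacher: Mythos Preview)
Your proof is correct and follows essentially the same approach as the paper: first redirect the sinks $w_2,\dots,w_k$ into $w_1$ via Node Redirect (using that sinks are vacuously out-twins), then apply Edge Multiplication to collapse the $k$ parallel edges into the single edge $e_i$, and finally evaluate the one-edge graph. The only cosmetic difference is that you invoke Source Edge (Proposition~\ref{prop:loc-sw-se}c) for the last step, whereas the paper applies Baseline directly since $e_i$ is in fact an isolated edge in that graph.
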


\begin{proof}
Let us fix arbitrary $i \in \{ 1, \dots, k\}$.
Since nodes $w_1,\dots,w_k$ are all sinks, they are also out-twins.
Hence, from Node Redirect, we know that sequentially redirecting nodes $w_2,\dots,w_k$ into node $w_1$ preserves the centrality of edges $e_1,\dots,e_k$.
Therefore, in the obtained graph,
\(
    G' \! = ( \{v, w_1\},\! \{e_1 \ee (v,w_1), \dots, e_k \ee (v,w_1) \}, \! [x, 0]),
\)
we have $F_{e_i}(G) = F_{e_i}(G')$.
See Figure~\ref{fig:star} for an illustration.

Next, consider graph
\(
    G_i = ( \{v, w_1\},\!\{e_i \e (v,w_1) \},\! [x, 0]).
\)
Observe that $G_i$ can be obtained from $G'$ by removing all edges but $e_i$.
Thus, from Edge Multiplication, we get $F_{e_i}(G')=F_{e_i}(G_i) / k$.
On the other hand, from Baseline, $F_{e_i}(G_i)=x$.
Combining all equations, we obtain that
\(
    F_{e_i}(G) = F_{e_i}(G') = F_{e_i}(G_i) / k = x/k.
\)
\end{proof}



\begin{figure}[t]
\centering
\begin{minipage}{.5\textwidth}
    \centering
    \begin{tikzpicture}[x=4cm,y=4cm] 
      \tikzset{     
        e4c node/.style={circle,draw,minimum size=0.6cm,inner sep=0}, 
        e4c edge/.style={sloped,above,font=\footnotesize},
        e4cn edge/.style={above,pos=0.6,font=\footnotesize}
      }
      \def\x{0}
      \def\h{0.3}
      \node (G) at (\x-0.1, \h) {$G$}; 
      \node[e4c node] (1) at (\x+0.2, \h) {$u$}; 
      \node[e4c node] (2) at (\x+0.00, 0) {$w$}; 
      \node[e4c node] (3) at (\x+0.40, 0) {$v$};

      \path[->,draw,thick]
      (3) edge[e4c edge, loop above] node  {$e_2$} (3)
      (1) edge[e4cn edge]  node  [xshift=-3]  {$e_1$} (2)
      ;
      
      \def\x{0.8}
      \node (G) at (\x-0.1, \h) {$G'$}; 
      \node[e4c node] (1) at (\x+0.2, \h) {$u$}; 
      \node[e4c node] (2) at (\x+0.00, 0) {$w$}; 
      \node[e4c node] (3) at (\x+0.40, 0) {$v$};

      \path[->,draw,thick]
      (3) edge[e4c edge] node {$e_2$} (2)
      (1) edge[e4cn edge]  node  [xshift=3] {$e_1$} (3)
      ;
    
    \end{tikzpicture}
    \captionof{figure}{An illustration to the proof of Lemma~\ref{lemma:a_f}.}
    \label{fig:a_f}
\end{minipage}%
\begin{minipage}{.5\textwidth}
    \centering
    \begin{tikzpicture}[x=4cm,y=4cm] 
      \tikzset{     
        e4c node/.style={circle,draw,minimum size=0.6cm,inner sep=0},
        e4c edge/.style={right,font=\footnotesize},
        e4cn edge/.style={above,font=\footnotesize}
      }
      
      \def\xdist{0.25}
      \def\ydist{0.3}
      \def\b{20} 
      
      \def\x{0};
      \def\y{0};
      \node (G) at (\x, \y+\ydist + 0.1) {$G$};
      \node[e4c node] (2) at (\x+\xdist, \y+\ydist) {$v$};
      \node[e4c node] (3) at (\x+0, \y) {$w_1$}; 
      \node[e4c node] (4) at (\x+\xdist, \y) {$w_2$};
      \node[e4c node] (5) at (\x+2*\xdist, \y) {$w_3$}; 
    
      \path[->,draw,thick]
      (2) edge[e4c edge] node [left]  {$e_1$}  (3)
      (2) edge[e4c edge] node  [xshift=-2.5]  {$e_2$}  (4)
      (2) edge[e4c edge] node  {$e_3$} (5)
      ;
    
      \def\x{0.65};
      \node (G) at (\x + \xdist - 0.2, \y+\ydist + 0.1) {$G'$};
      \node[e4c node] (2) at (\x+\xdist, \y+\ydist) {$v$};
      \node[e4c node] (3) at (\x+\xdist, \y) {$w_1$};
      
      \path[->,draw,thick]
      (2) edge[e4c edge, bend right=50]  node [left]  {$e_1$} (3)
      (2) edge[e4c edge] node  [xshift=-2.5]  {$e_2$} (3)
      (2) edge[e4c edge, bend left=50] node  {$e_3$} (3)
      ;

      \def\x{1.1};
      \node (G) at (\x + \xdist - 0.2, \y+\ydist + 0.1) {$G_i$};
      \node[e4c node] (2) at (\x+\xdist, \y+\ydist) {$v$};
      \node[e4c node] (3) at (\x+\xdist, \y) {$w_1$};
      
      \path[->,draw,thick]
      (2) edge[e4c edge]  node  {$e_i$}  (3)
      ;
    \end{tikzpicture}
    \captionof{figure}{An illustration to the proof of Lemma~\ref{lemma:star} for $k=3$.}
    \label{fig:star}
\end{minipage}
\end{figure}

Now, we assume that $v$ has exactly one incoming edge and multiple outgoing edges.

\begin{lemma}\label{lemma:star_edge}
If edge centrality measure $F$ satisfies Node Deletion, Edge Deletion, Edge Multiplication, Node Redirect, and Baseline, then for every $k\in \mathbb{N}$ and graph 
\[\begin{split}
G = (&\{u, v, w_1,\dots w_k\}, \\
&\{e \e (u,v), e_1 \e (v,w_1), \dots, e_k \e (v,w_k)\}, \\
&[x, y, 0, \dots, 0]), 
\end{split}\]
it holds that $F_e(G) = x$ and $F_{e_i}(G) = (a_F \cdot x + y) / k$ for every $i\in \{1,\dots, k\}$.
\end{lemma}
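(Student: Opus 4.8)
The plan is to mirror the argument of Lemma~\ref{lemma:star} almost verbatim, the only difference being that node $v$ now carries an incoming edge, so the base case is supplied by Lemma~\ref{lemma:two_path_weighted} instead of Baseline. The equality $F_e(G)=x$ is immediate: since $u$ has no incoming edges and $e$ is its unique outgoing edge, $e$ is a source edge, and Source Edge (Proposition~\ref{prop:loc-sw-se}c) gives $F_e(G)=b(u)=x$.

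For the outgoing edges, fix an arbitrary $i\in\{1,\dots,k\}$. The nodes $w_1,\dots,w_k$ are all sinks, hence trivially out-twins (witnessed by the empty bijection), so Node Redirect lets me sequentially redirect $w_2,\dots,w_k$ into $w_1$ without changing the centrality of any edge. The resulting graph $G'=(\{u,v,w_1\},\{e\e(u,v),e_1\e(v,w_1),\dots,e_k\e(v,w_1)\},[x,y,0])$ therefore satisfies $F_{e_i}(G)=F_{e_i}(G')$. In $G'$ every outgoing edge of $v$ ends at $w_1$, so the precondition $E^+_v(G')\subseteq E^-_{w_1}(G')$ of Edge Multiplication holds; applying it with the distinguished edge $e^*=e_i$ removes all outgoing edges of $v$ except $e_i$ and yields the two-path graph $G_i''=(\{u,v,w_1\},\{e\e(u,v),e_i\e(v,w_1)\},[x,y,0])$, together with the identity $F_{e_i}(G_i'')=\deg^+_v(G')\cdot F_{e_i}(G')=k\cdot F_{e_i}(G')$. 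Finally Lemma~\ref{lemma:two_path_weighted} evaluates $F_{e_i}(G_i'')=a_F\cdot x+y$, and chaining the three equalities gives $F_{e_i}(G)=F_{e_i}(G')=(a_F\cdot x+y)/k$. As $i$ was arbitrary, this holds for every $i$.

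Because the lemma rests entirely on facts already established, I expect no genuine obstacle; the only points needing care are bookkeeping. At each redirection I must confirm that the merged nodes remain out-twins (they stay sinks throughout) and that the additive branch of Node Redirect never fires---it cannot, since $w_1$ has no outgoing edges and so $E^+_{w_1}=\emptyset$, which is precisely what keeps the centralities of $e_1,\dots,e_k$ intact. Likewise, after the sinks are collapsed I must check that the Edge Multiplication precondition $E^+_v(G')\subseteq E^-_{w_1}(G')$ genuinely holds; this is what permits every $e_i$ to be treated symmetrically and reduced to the same two-path graph. With these verifications in place the computation is routine.
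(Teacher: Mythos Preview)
Your proposal is correct and follows essentially the same route as the paper: redirect the sinks $w_2,\dots,w_k$ into $w_1$ via Node Redirect, then apply Edge Multiplication at $v$ to collapse to the two-path graph, and finish with Lemma~\ref{lemma:two_path_weighted}. Your explicit verification that $F_e(G)=x$ via Source Edge and your bookkeeping remarks on the Node Redirect and Edge Multiplication preconditions are exactly the checks the paper relies on implicitly.
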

\begin{proof}[Proof (Sketch)]
The proof is analogous to the proof of Lemma~\ref{lemma:star} with only two differences:
first, in every graph node $v$ has one incoming edge from $u$;
second, the final equation for graphs $G_i$ is of the form $F_{e_i}(G_i)= a_F \cdot x + y$, which is implied by Lemma~\ref{lemma:two_path_weighted}. 
\end{proof}

We are now ready for the final lemma of this section.

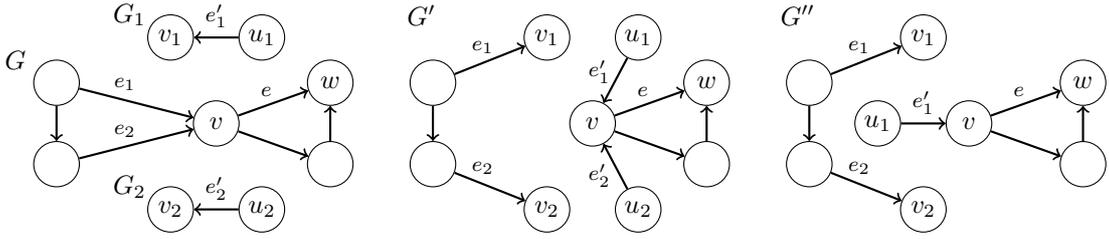
\begin{figure*}[t]
\centering
\begin{tikzpicture}[x=3cm,y=3cm] 
  \tikzset{     
    e4c node/.style={circle,draw,minimum size=0.6cm,inner sep=0}, 
    e4c edge/.style={above,font=\footnotesize}
  }
  \def\d{0}
  \def\shift{0.1}
  \def\x{0.4}
  \def\xx{0.6}
  \def\y{0.18}
  \def\yy{0.38}
  \node (G) at (\d - 0.18, \y + 0.1) {$G$};
  \node (G1) at (\d + \shift + \x - 0.18, \yy + 0.1) {$G_1$};
  \node (G2) at (\d + \shift + \x - 0.18, -\yy + 0.1) {$G_2$};
  \node[e4c node] (u_1) at (\d, \y) {}; 
  \node[e4c node] (u_2) at (\d, -\y) {};
  \node[e4c node] (v1) at (\d + \shift + \x, \yy) {$v_1$};
  \node[e4c node] (v2) at (\d + \shift + \x, -\yy) {$v_2$};
  \node[e4c node] (v) at (\d + \shift + \xx, 0) {$v$};
  \node[e4c node] (u1) at (\d + \shift + 2*\x, \yy) {$u_1$};
  \node[e4c node] (u2) at (\d + \shift + 2*\x, -\yy) {$u_2$};
  \node[e4c node] (w_1) at (\d + 2*\xx, \y) {$w$};
  \node[e4c node] (w_2) at (\d + 2*\xx, -\y) {};
  
  \path[->,draw,thick, e4c edge]
  (u_1) edge node[above, pos=0.4] {$e_1$} (v)
  (u_1) edge (u_2)
  (u_2) edge node[above, pos=0.4] {$e_2$} (v)
  (v) edge node[above, pos=0.4] {$e$}  (w_1)
  (v) edge (w_2)
  (w_2) edge (w_1)
  (u1) edge node[above] {$e'_1$} (v1)
  (u2) edge node[above] {$e'_2$} (v2)
  ;

  \def\d{1.65}
  \node (G) at (\d -0.05, \yy + 0.1) {$G'$};
  \node[e4c node] (u_1) at (\d, \y) {}; 
  \node[e4c node] (u_2) at (\d, -\y) {};
  \node[e4c node] (v1) at (\d + \shift + \x, \yy) {$v_1$};
  \node[e4c node] (v2) at (\d + \shift + \x, -\yy) {$v_2$};
  \node[e4c node] (v) at (\d + \shift + \xx, 0) {$v$};
  \node[e4c node] (u1) at (\d + \shift + 2*\x, \yy) {$u_1$};
  \node[e4c node] (u2) at (\d + \shift + 2*\x, -\yy) {$u_2$};
  \node[e4c node] (w_1) at (\d + 2*\xx, \y) {$w$};
  \node[e4c node] (w_2) at (\d + 2*\xx, -\y) {};
  
  \path[->,draw,thick, e4c edge]
  (u_1) edge node[above, pos=0.4] {$e_1$} (v1)
  (u_1) edge (u_2)
  (u_2) edge node[above, pos=0.4] {$e_2$} (v2)
  (v) edge node[above, pos=0.4] {$e$}  (w_1)
  (v) edge (w_2)
  (w_2) edge (w_1)
  (u1) edge node[left, pos=0.3] {$e'_1$} (v)
  (u2) edge node[left, pos=0.3] {$e'_2$} (v)
  ;

  \def\d{3.3}
  \node (G) at (\d -0.05, \yy + 0.1) {$G''$};
  \node[e4c node] (u_1) at (\d, \y) {}; 
  \node[e4c node] (u_2) at (\d, -\y) {};
  \node[e4c node] (v1) at (\d + \shift + \x, \yy) {$v_1$};
  \node[e4c node] (v2) at (\d + \shift + \x, -\yy) {$v_2$};
  \node[e4c node] (v) at (\d + \shift + \xx, 0) {$v$};
  \node[e4c node] (u1) at (\d + 0.5*\xx, 0) {$u_1$};
  \node[e4c node] (w_1) at (\d + 2*\xx, \y) {$w$};
  \node[e4c node] (w_2) at (\d + 2*\xx, -\y) {};
  
  \path[->,draw,thick, e4c edge]
  (u_1) edge node[above, pos=0.4] {$e_1$} (v1)
  (u_1) edge (u_2)
  (u_2) edge node[above, pos=0.4] {$e_2$} (v2)
  (v) edge node[above, pos=0.4] {$e$}  (w_1)
  (v) edge (w_2)
  (w_2) edge (w_1)
  (u1) edge node[above] {$e'_1$} (v)
  ;
\end{tikzpicture}
\caption{The scheme of the proof of Lemma~\ref{lemma:any_graph} for an example graph, $G$, with $m=2$.}
\label{fig:any_graph}
\end{figure*}

\begin{lemma}\label{lemma:any_graph}
If edge centrality measure $F$ satisfies Node Deletion, Edge Deletion, Edge Multiplication, Edge Swap, Node Redirect, and Baseline, then $F$ is Edge PageRank.
\end{lemma}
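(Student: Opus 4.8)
The plan is to prove that $F$ satisfies the defining recursive equation~\eqref{eq:pagerank} with decay factor $a_F$; since Edge PageRank is by definition the \emph{unique} measure satisfying this equation, this will force $F = PR^{a_F}$. Concretely, I fix an arbitrary graph $G$, an edge $e \e (v,w) \in E$, and write $d = \deg_v^+(G)$ and $S = \sum_{e' \in E_v^-(G)} F_{e'}(G)$; the goal is to establish $F_e(G) = (a_F \cdot S + b(v))/d$.

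First I would replace the entire upstream of $v$ by a single source edge carrying exactly the centrality mass $S$, following the scheme in Figure~\ref{fig:any_graph}. For each incoming edge $e_i \e (x_i,v)$ of $v$ I attach a disjoint two-node gadget $G_i$ consisting of a single source edge $e'_i \e (u_i,v_i)$ with $b(u_i) = F_{e_i}(G)$; by Locality (Proposition~\ref{prop:loc-sw-se}a) this alters no centrality, and by Source Edge (Proposition~\ref{prop:loc-sw-se}c) we get $F_{e'_i} = b(u_i) = F_{e_i}(G)$. As the paired edges have equal centrality, Edge Swap lets me exchange the ends of $e_i$ and $e'_i$ for each $i$ in turn; each swap preserves all centralities, so the equal-centrality precondition persists for the next swap. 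Afterwards the incoming edges of $v$ are exactly the source edges $e'_i$, while the old $e_i$ point harmlessly into the dead-end nodes $v_i$. Finally the nodes $u_i$ are out-twins (each has one outgoing edge into $v$), so Node Redirect lets me merge them one by one into a single node $u^*$; each redirection sums the centralities of the corresponding edges, yielding a graph $G''$ in which $v$ has exactly one incoming edge, a source edge from $u^*$ with $b(u^*) = S$, and in which the downstream of $v$ is untouched. Since $e$ is never an outgoing edge of a redirected node, $F_e(G) = F_e(G'')$ throughout.

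The crux is the next observation, which is precisely what allows the proof to sidestep any induction on cycles. In $G''$ the only edge entering $v$ originates at $u^*$, and $u^*$ has no incoming edges; hence no walk leaving $v$ can ever return to it, i.e.\ $v \notin S_z(G'') \cup \{z\}$ for every successor $z$ of $v$. Consequently Edge Deletion applies to every edge whose start is a successor of $v$, and deleting all of them one by one preserves $F_e$ while turning each out-neighbour of $v$ into a sink. After zeroing the weights of these sinks by Sink Weight (Proposition~\ref{prop:loc-sw-se}b) and discarding the now-disconnected gadget and downstream remnants by Locality and Node Deletion, I am left with exactly the graph of Lemma~\ref{lemma:star_edge}: node $v$ of weight $b(v)$ with one incoming source edge from a node of weight $S$ and $d$ outgoing edges into sinks (when $v$ has no incoming edge at all, $S=0$ and Lemma~\ref{lemma:star} applies instead). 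Lemma~\ref{lemma:star_edge} then gives $F_e(G) = F_e(G'') = (a_F \cdot S + b(v))/d$, which is exactly Equation~\eqref{eq:pagerank}.

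Having verified that $F$ obeys~\eqref{eq:pagerank} for the decay factor $a_F \in [0,1)$ (Lemma~\ref{lemma:a_f}) at every edge of every graph, uniqueness of the solution of this linear system completes the proof. I expect the main obstacle to be this cycle-breaking step: one must argue rigorously that collapsing the incoming edges of $v$ into a single source edge genuinely removes $v$ from all cycles, and that the accompanying bookkeeping of Locality, Sink Weight, Node Deletion and the repeated Edge Deletion reduces $G''$ to the star form without ever disturbing $F_e$ or the realized value of $S$. The degenerate cases---parallel outgoing edges, coinciding out-neighbours, and nodes that become isolated---will also need to be checked against the precise hypotheses of Lemma~\ref{lemma:star_edge}.
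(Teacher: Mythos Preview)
Your proposal is correct and follows essentially the same route as the paper: add the one-edge gadgets $G_i$, use Edge Swap to replace the incoming edges of $v$ by source edges, redirect the $u_i$ into a single source node, then strip the graph down to the configuration of Lemma~\ref{lemma:star_edge}. The only point where you are slightly less direct than the paper is the cleanup step: rather than removing just the edges whose start is a successor of $v$ and then invoking Locality, the paper observes that in $G''$ the node $v$ is reachable \emph{only} from $u_1$, so Edge Deletion already applies to \emph{every} edge not incident with $v$ (this avoids the worry you flag about remnants that are not actually disconnected from the out-neighbours of $v$).
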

\begin{proof}[Proof (Sketch)]
We will prove that $F$ satisfies Edge PageRank recursive equation (Equation~\eqref{eq:pagerank}) with decay factor $a_F$ for every graph $G = (V, E,\phi, b)$  and edge $e \e (v,w) \in E$.
Since this equation uniquely define Edge PageRank, it will imply that $F$ is indeed Edge PageRank.

First assume that $v$ does not have incoming edges.
Then, removing all edges not incident with $v$ and nodes that become isolated in doing so, we obtain a graph from Lemma~\ref{lemma:star} except possibly non-zero weights of sinks.
Hence, from Edge Deletion, Node Deletion, Sink Weight (Proposition~\ref{prop:loc-sw-se}b), and Lemma~\ref{lemma:star}, we get $F_e(G)=b(v)/\deg^+_v(G)$, which is Equation~\eqref{eq:pagerank} for edge $e$ in graph $G$.

Thus, let us assume that $v$ has $m>0$ incoming edges, i.e., $E^-_v(G) = \{e_1,\dots,e_m$\}. 
We denote their centralities by $x_i = F_{e_i}(G)$ for every $i \in \{1,\dots,m\}$.
In what follows, through several operations we transform graph $G$ into graph from Lemma~\ref{lemma:star_edge} such that the centrality of $e$ is unchanged.

To this end, for every $i \in \{1,\dots,m\}$, we add to $G$ a simple one-edge graph $G_i=( \{u_i,v_i\},\{e'_i \ee (u_i,v_i)\},[x_i,0])$
(see Figure~\ref{fig:any_graph}).
In the obtained sum of graphs, for every $i \in \{1,\dots,m\}$, edges $e_i$ and $e'_i$ have equal centralities
(from Locality (Proposition~\ref{prop:loc-sw-se}a) and Baseline).
Thus, if we exchange their ends, then by Edge Swap we won't affect the centrality of any edge.
Exchanging ends sequentially for all such pairs, we obtain graph $G'$ in which incoming edges of $v$, i.e., $E^-_v(G')=\{e'_1,\dots,e'_m\}$, are all source edges.

Now, observe that in graph $G'$, nodes $u_1,\dots,u_m$ are out-twins (each has one outgoing edge to $v$).
Thus, from Node Redirect we can sequentially redirect nodes $u_2,\dots,u_m$ into $u_1$ without affecting the centrality of $e$.
Let us denote resulting graph by $G''$.
Observe that the weight of node $u_1$ in graph $G''$ is equal to $\sum_{i=1}^m x_i$.
Also, node $v$ has one incoming edge, $e'_1$, that is a source edge.
Removing all edges not incident with $v$ and nodes that become isolated in doing so, we obtain graph from Lemma~\ref{lemma:star_edge} but with possibly non-zero weights of sinks.
Thus, from Edge Deletion, Node Deletion, Sink Weight (Proposition~\ref{prop:loc-sw-se}b), and Lemma~\ref{lemma:star_edge} we get that
\[
    F_e(G'') = \frac{1}{\deg^+_v(G'')} \left( a_F \cdot \sum_{i=1}^m x_i + b(v) \right).
\]
Since $\sum_{i=1}^m x_i =  \sum_{e_i \in E^-_v(G)}  F_{e_i}  (G)$, $F_e(G'')=F_e(G)$, and $\deg^+_v  (G'')  =  \deg^+_v (G)$,
Equation~\eqref{eq:pagerank} holds.
\end{proof}


  


\section{Comparison with Other Edge Centralities}
In this section, we provide an overview of edge centrality measures from the literature and analyze which of our axioms they satisfy.
Some of these measures are defined only for a specific class of graphs, e.g., strongly connected graphs.
In such case, we consider our axioms \emph{restricted} to this class, i.e., we add an additional constraint that in all graphs considered in the axiom the centrality is well defined.

\emph{Eigenedge}~\cite{Huang:Huang:2019}
assumes that the centrality of an edge is proportional to the sum of the centralities of the edges incoming to its start.
Formally, it is defined as a measure that for every strongly connected graph $G=(V,E,\phi,b)$ and edge $e \e (u,v) \in E$ satisfies
\[
    Eig_e(G) = \frac{1}{\lambda} \cdot \sum_{e' \in E^-_u(G)} Eig_{e'}(G),
\]
where $\lambda$ is the largest eigenvalue of the adjacency matrix of $G$.
Usually, a normalization condition is added to make the solution unique, e.g., that the sum of all centralities is equal to 1.
With this condition, Eigenedge is well defined for all strongly connected graphs and 
satisfies all our axioms restricted to this class except for Edge Multiplication.

\emph{Edge Katz centrality} can be derived from (node) Katz centrality~\cite{Katz:1953} in the same way as Eigenedge is derived from Eigenvector centrality~\cite{Bonacich:1972}, and Edge PageRank from (node) PageRank.
It works similarly to Eigenedge, but to every edge we add a base centrality equal to the weight of its start.
Formally, for a decay factor $a \in \mathbb{R}_{\ge 0}$ it is defined as a unique measure that for every graph $G=(V,E,\phi,b)$ with $\lambda < 1/a$ and edge $e \e (u,v) \in E$ satisfies
\[
    K^a_e(G) = a \cdot \sum_{e' \in E^-_u(G)} K^a_{e'}(G) + b(u).
\]
Katz centrality satisfies all our axioms restricted to the class of graphs with $\lambda < 1/a$ except for Edge Multiplication.

In the same way, \emph{Edge Seeley index} can be derived from (node) Seeley index~\cite{Seeley:1949} (known also as \emph{Katz Prestige} or \emph{simplified PageRank}).
It can be seen as a borderline case of PageRank, when we increase decay factor $a$ to 1~\cite{Was:Skibski:2021:feedback}.
Formally, it is defined as the measure that for every strongly connected graph $G=(V,E,\phi,b)$ and edge $e \e (u,v) \in E$ satisfies
\[
    SI_e(G) = \frac{1}{\deg^+_{u}(G)} \sum_{e' \in E^-_u(G)} SI_{e'}(G).
\]
Like for Eigenedge, to obtain uniqueness, we add a normalization condition that the sum of centralities is equal to 1.
Then, Seeley index satisfies all of our axioms restricted to the class of strongly connected graphs
(since axioms are restricted, this does not contradict Theorem~\ref{theorem:main}).

\emph{Edge Betweenness}~\cite{Girvan:Newman:2002} measures how often an edge is on a shortest path between two nodes.
For graph $G = (V,E,\phi,b)$ and edge $e \in E$ it is defined as
\[
    B_e(G) = \sum_{s,t\in V, s \neq t, t\in S_s(G)}\frac{\delta_{s,t}(e)}{\delta_{s,t}},
\]
where $\delta_{s,t}$ is the number of shortest paths from $s$ to $t$ and $\delta_{s,t}(e)$ is the number of such paths passing through $e$.
Edge Betweenness satisfies only Node Deletion.

\emph{Information centrality}~\cite{Fortunato:etal:2004} is defined as the relative loss in the network efficiency
that results from removing an edge.
Formally, for every graph $G=(V,E,\phi,b)$ and edge $e \in E$ we have
\[
    I_e(G) \!=\! \left(E\! f\! f(G) \!-\! E\! f\! f(V, E\setminus \{e\}, \phi|_{E\setminus \{e\}}, b)\right) \!/ E\! f\! f(G),
\]
where $E\! f\! f(G) = \sum_{v,w\in V, v\not= w} 1/dist_{v,w}(G)$.
It is well defined for strongly connected graphs and satisfies Node Deletion, Edge Deletion, and Baseline restricted to this class.

\emph{GTOM (Generalized Topological Overlap Matrix)}~\cite{Yip:Horvath:2007} of an edge measures how many common direct successors have its start and its end.
Formally, for $G=(V,E,\phi,b)$ and edge $e \e (u,v) \in E$ it is defined as
\[
    GT O\! M_e(G) =
        \frac{|\{w \! : \! (u, \! w),(v,\! w) \! \in \! E \}| + 1}
        {\min \! \big(| \{w \! : \! (u,\! w) \! \in \! E\} |,  | \{w \! : \! (v,\! w) \! \in \! E\} | \big)}.
\]
It satisfies only Node Deletion.

We do not consider \emph{Spanning Edge Betweenness}
~\cite{Teixeira:etal:2013} as it is defined only on undirected graphs.

\section{Using Line Graphs in Axiomatization}

Many edge centrality measures, including Edge PageRank, can be equivalently defined as node centralities in line graphs.
In this section, we analyze whether this fact can be used in creating axiomatization of an edge centrality based on the axiomatization of a node centrality.

For a graph $G$, the \emph{line graph} $L(G)$ is a graph that represents adjacencies between edges of $G$. 
Specifically, nodes of the line graph are edges of $G$ and edges of the line graph connect nodes that represent edges incident in $G$.
Formally, for graph $G = (V,E,\phi,b)$, its line graph is defined as
\[ L(G) = (E, \{(e_i, e_j) : \phi_2(e_i) = \phi_1(e_j)\}, b'), \]
with $b'(e) = b(u)/\deg^+_{u}(G)$ for every $e : (u,v) \in E$.
See Figure~\ref{fig:line_graph} for an illustration.

\citet{Chapela:etal:2015} proved that Edge PageRank is equivalent to PageRank of the corresponding line graph: $PR^a_e(L(G)) = PR^a_e(G)$ for every graph $G$ and edge $e$.
This result suggests that an axiomatization can be obtained by using node centrality axioms on line graphs.
Observe that most axioms proposed for (node) PageRank are so-called \emph{invariance axioms}: they specify graph operations that do not change centralities of nodes.
For example, \emph{Edge Swap} states that the ends of edges from equally important nodes with equal out-degrees can be swapped.
Can we just use these axioms to uniquely characterize (node) PageRank on line graphs which corresponds to Edge PageRank?

As it turns out, it is not possible. 
The first reason is the fact that line graphs are not closed under some of the invariance operations.
To see that, consider graph $G$ from Figure~\ref{fig:line_graph}.
Since graph $L(G)$ is symmetrical, by Edge Swap, we can swap the ends of edges $(e_2,e_6)$ and $(e_3,e_7)$.
However, we can prove that the graph obtained in this way is no longer a line graph of any graph.
%
Assume that it is the line graph of some $G'=(V',E',\phi',b')$.
Since after edge swap, $e_1$ and $e_3$ have edges to $e_6$, it means that $\phi'_2(e_1)=\phi'_1(e_6)=\phi'_2(e_3)$.
Similarly, we get that $\phi'_2(e_3)=\phi'_1(e_8)=\phi'_2(e_4)$.
Thus, we get $\phi'_2(e_1)=\phi'_2(e_4)$.
However, this implies that in the line graph there are edges $(e_4,e_6)$ and $(e_1,e_8)$, none of which is present in our graph---a contradiction.


%


Finally, we note that not every edge centrality can be defined as a node centrality of a line graph.
To see that, observe that merging starts of edges $e_1$ and $e_4$ in graph $G$ from Figure~\ref{fig:line_graph} does not affect the line graph.
However, for some edge centralities, e.g., Edge Betweenness, centrality of $e_4$ changes after such merge.
That is why analyzing edge centralities cannot be reduced to the analysis of node centralities in line graphs.

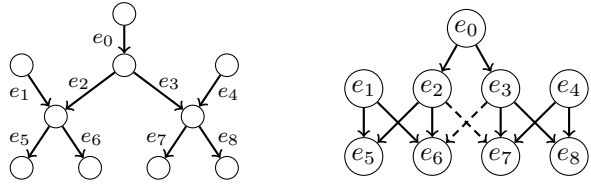
\begin{figure}[t]
\centering
\begin{tikzpicture}[x=3cm,y=2cm] 
  \tikzset{     
    e4c node/.style={circle,draw,minimum size=0.3cm,inner sep=0},
    e4c edge/.style={left,font=\footnotesize}
  }
  
  \def\xdist{0.3}
  \def\ydist{0.34}
  \def\nodelettersize{\scriptsize}
  
  \node[e4c node] (1) at (1.5*\xdist, 3*\ydist) {}; 
  \node[e4c node] (10) at (0*\xdist, 2*\ydist) {}; 
  \node[e4c node] (2) at (1.5*\xdist, 2*\ydist) {}; 
  \node[e4c node] (8) at (3*\xdist, 2*\ydist) {}; 
  \node[e4c node] (3) at (0.5*\xdist, \ydist) {}; 
  \node[e4c node] (4) at (2.5*\xdist, \ydist) {}; 
  \node[e4c node] (9) at (0.0*\xdist, 0) {}; 
  \node[e4c node] (7) at (1.0*\xdist, 0) {}; 
  \node[e4c node] (5) at (2.0*\xdist, 0) {}; 
  \node[e4c node] (6) at (3.0*\xdist, 0) {};

  \path[->,draw,thick]
  (1) edge[e4c edge] node {$e_0$} (2)
  (10) edge[e4c edge,pos=0.6,left] node {$e_1$} (3)
  (2) edge[e4c edge,pos=0.3] node {$e_2$} (3)
  (2) edge[e4c edge,pos=0.3,right] node {$e_3$} (4)
  (8) edge[e4c edge,pos=0.6,right] node {$e_4$} (4)
  (3) edge[e4c edge,pos=0.4] node {$e_5$} (9)
  (3) edge[e4c edge,pos=0.4,right] node {$e_6$} (7)
  (4) edge[e4c edge,pos=0.4] node {$e_7$} (5)
  (4) edge[e4c edge,pos=0.4,right] node {$e_8$} (6)
  ;

  \tikzset{     
    e4c node/.style={circle,draw,minimum size=0.5cm,inner sep=0},
    e4c edge/.style={left,font=\footnotesize}
  }
  \def\x{1.5}
  \def\y{-0.15}
  \def\xdist{0.3}
  \def\ydist{0.45}

  \node[e4c node] (1) at (\x+1.5*\xdist, \y+2.5*\ydist-0.05) {$e_0$}; 
  \node[e4c node] (10) at (\x+0*\xdist, \y+1.5*\ydist) {$e_1$}; 
  \node[e4c node] (2) at (\x+1*\xdist, \y+1.5*\ydist) {$e_2$}; 
  \node[e4c node] (3) at (\x+2*\xdist, \y+1.5*\ydist) {$e_3$}; 
  \node[e4c node] (4) at (\x+3*\xdist, \y+1.5*\ydist) {$e_4$}; 
  \node[e4c node] (8) at (\x+0*\xdist, \y+0.5*\ydist) {$e_5$}; 
  \node[e4c node] (7) at (\x+1*\xdist, \y+0.5*\ydist) {$e_6$}; 
  \node[e4c node] (5) at (\x+2*\xdist, \y+0.5*\ydist) {$e_7$}; 
  \node[e4c node] (6) at (\x+3*\xdist, \y+0.5*\ydist) {$e_8$};

  \path[->,draw,thick]
  (10) edge[e4c edge]  (7)
  (10) edge[e4c edge]  (8)
  (2) edge[e4c edge]  (8)
  (1) edge[e4c edge]  (3)
  (3) edge[e4c edge]  (5)
  (4) edge[e4c edge]  (5)
  (1) edge[e4c edge]  (2)
  (2) edge[e4c edge]  (7)
  (4) edge[e4c edge]  (6)
  (3) edge[e4c edge]  (6)
  (2) edge[e4c edge, densely dashed]  (5)
  (3) edge[e4c edge, densely dashed]  (7)
  ;
\end{tikzpicture}

\caption{Graph $G$ (on the left-hand side) and its line-graph $L(G)$ (on the right-hand side). Dashed edges represent the effect of the edge swap operation.}\label{fig:line_graph}
\end{figure}

\section{Related Work}

Axiomatic characterizations have been proposed for many node centrality measures~\cite{Garg:2009,Skibski:etal:2019:attachment}, including both simplified PageRank~\cite{Altman:Tennenholtz:2005, Palacios-Huerta:Volij:2004}, and PageRank in its general form~\cite{Was:Skibski:2020:pagerank}.
However, to the best of our knowledge, there are no axiomatic characterizations of edge centrality measures in the literature to date.

Several papers studied properties of Edge PageRank.
\citet{Chapela:etal:2015} proved that PageRank of an edge is equal to PageRank of the corresponding node in the line graph.
\citet{Kim:etal:2010} showed how Edge PageRank (under the name \emph{LinkRank}) can be used in community detection.
A similar, but different, edge metric derived from PageRank was used by \citet{Chung:Zhao:2010} in their graph sparsification algorithm.





\section{Conclusions}

In this paper, we proposed the first axiomatic characterization of an edge centrality measure in the literature.
Specifically, we proved that Edge PageRank is a unique centrality measure that satisfies six axioms: Node Deletion, Edge Deletion, Edge Multiplication, Edge Swap, Node Redirect, and Baseline.

Our paper initiates the research on axiomatic properties of edge centrality measures, many of which deserve their own characterizations.
An extension of our work into another direction would be to axiomatically analyze node similarity measures, e.g., SimRank.

\section{Acknowledgments}
This work was supported by the Polish National Science Center grant  2018/31/B/ST6/03201.
Natalia Kucharczuk was supported by the Ministry of Science and Higher Education project Szkola Orlow, project number 500-D110-06-0465160.

\printbibliography

\clearpage

\appendix
\renewcommand\thefigure{\arabic{figure}}
\setcounter{figure}{0}    


\section{Proof of Uniqueness}
In this section, we present the full proof of Theorem~\ref{theorem:main} that Edge PageRank is a unique edge centrality measure satisfying Node Deletion, Edge Deletion, Edge Multiplication, Edge Swap, Node Redirect, and Baseline.

\subsection{Edge PageRank Implies Axioms}
We begin by showing that Edge PageRank indeed satisfies all our axioms.

\begin{apptheorem}{Lemma}{\ref{lemma:axioms}} 
Edge PageRank satisfies Node Deletion, Edge Deletion, Edge Multiplication, Edge Swap, Node Redirect, and Baseline.
\end{apptheorem}
\begin{proof}
Take arbitrary constant $a \in [0,1)$ and graph $G=(V,E,\phi,b)$ and let us consider axioms one by one.
For most of our axioms, we consider graph $G'$ which is graph $G$ modified in the way described by the axiom and centrality in this graph, $F(G')$, that is $PR^a(G)$ modified in accordance with the axiom.
Then, we prove that $F(G')$ satisfies PageRank recursive equation (Equation~\eqref{eq:pagerank}) for graph $G'$.
Since this equation uniquely defines PageRank, this implies that $F(G')$ is in fact PageRank of $G'$ and thus PageRank indeed satisfies given axiom.

\subsubsection{Node Deletion}
Consider isolated node $u \in V$ and graph $G' = (V \setminus \{u\},E,\phi,b|_{V \setminus \{u\}})$.
Let centrality $F$ be such that $F_e(G')=PR^a_e(G)$ for every $e \in E$.
We will show that $F(G')$ satisfies PageRank recursive equation (Equation~\eqref{eq:pagerank}) for graph $G'$, 
which will imply that $F(G')=PR^a(G')$ and thus $PR^a_e(G')=PR^a_e(G)$ for every $e \in E$.

Fix arbitrary $e \in E$.
From Equation~\eqref{eq:pagerank} for PageRank in graph $G$ and the fact that $F_e(G')=PR^a_e(G)$ for every $e \in E$ we have
\begin{equation}
    \label{eq:axioms:nd}
    F_{e}(G') = \frac{1}{\deg_{u}^+(G)} \left( a\cdot\sum_{e' \in E^-_u(G)} F_{e'}(G') + b(u)\right) \!\!.
\end{equation}
Since no edges have been modified, $E^-_u(G) = E^-_u(G')$ and $\deg_{u}^+(G)=\deg_u^+(G')$.
Thus, Equation~\eqref{eq:axioms:nd} is equivalent to PageRank recursive equation for centrality $F(G')$ and edge $e$.
Therefore, $F(G')$ satisfies PageRank recursive equation and Node Deletion follows.

\subsubsection{Edge Deletion}
Let us consider edge $e^* \e (u,v) \in E$ and graph $G' = (V, E',\phi |_{E'},b)$, where $E' = E \setminus \{e^*\}$.
Observe that in contrary to Node Deletion, Edge Deletion does not specify the centrality of every edge in graph $G'$, only those with the start that is not a successor of $u$ or $u$ itself, i.e., $E'' = \{ e \in E' : \phi_1(e) \not \in S_u(G) \cup \{u\}\}$.
Thus, let us first construct centrality $F(G')$ for our purpose.
To this end, consider graph $G'$ induced by $u$ and the successors of $u$, i.e.,
$G''=(S_u(G) \cup \{u\}, E' \setminus E'',\phi|_{E' \setminus E''},b'')$ where 
$$b''(v)=b(v) + a \cdot \sum_{e \in E^-_v(G') \cap E''} PR^a_e(G).$$
Then, let us define centrality $F(G')$ as
\begin{equation}
    \label{eq:axioms:ed:1}
    F_e(G') =
    \begin{cases}
        PR^a_e(G) & \mbox{if } e \in E'',\\
        PR^a_e(G'') & \mbox{otherwise,}
    \end{cases}
\end{equation}
for every $e \in E'$.
We will show that $F(G')$ satisfies PageRank recursive equation (Equation~\eqref{eq:pagerank}) for graph $G'$, 
which will imply that $F(G')=PR^a(G')$ and thus $PR^a_e(G')=PR^a_e(G)$ for every $e \in E''$.

Fix arbitrary $e \e (w,w') \in E$.
If $w$ is not $u$ or a successor of $u$, i.e, $w \not \in S_{u}(G) \cup \{u\}$, then for every $e' \in E^-_w(G)$ also the start of $e'$ is not a successor of $u$ or $u$, i.e., $\phi_1(e') \not \in S_{u}(G) \cup \{u\}$.
Thus, from Equation~\eqref{eq:axioms:ed:1} and PageRank recursive equation in graph $G$ we have
\begin{equation}
    \label{eq:axioms:ed:2}
    F_{e}(G') = \frac{1}{\deg_{w}^+(G)} \left( a\cdot\sum_{e' \in E^-_w(G)} F_{e'}(G') + b(w)\right) \!\!.
\end{equation}
Since $w \neq u$, we have $\deg_{w}^+(G)=\deg_{w}^+(G')$ and since $w \neq v$, we have $E^-_w(G)=E^-_w(G')$.
Thus, Equation~\eqref{eq:axioms:ed:2} is equivalent to PageRank recursive equation for centrality $F(G')$ and edge $e$.

Now, if $w$ is $u$ or a successor of $u$, then
from Equation~\eqref{eq:axioms:ed:1} and PageRank recursive equation in graph $G''$ we have
\begin{equation*}
    F_{e}(G') = \frac{1}{\deg_{w}^+(G'')} \left( a\cdot\sum_{e' \in E^-_w(G'')} F_{e'}(G') + b''(w)\right) \!\!.
\end{equation*}
Observe that in graph $G''$ outgoing edges of nodes in $S_u(G) \cup \{u\}$ are the same as in $G'$, hence we have
$\deg_{w}^+(G'')=\deg_w^+(G')$.
However, incoming edges from nodes other than $S_u(G) \cup \{u\}$ are not present in $G''$, thus
$E^-_w(G'') = E^-_w(G') \setminus E''$.
Therefore, we get that
\begin{equation}
    \label{eq:axioms:ed:3}
    F_{e}(G') = \frac{1}{\deg_{w}^+(G')} \!\left( a\cdot\!\sum_{e' \in E^-_w(G') \setminus E''}\! F_{e'}(G') + b''(w)\right) \!\!.
\end{equation}
We have put
$b''(w)=b(w) + a \cdot \sum_{e \in E^-_w(G') \cap E''} PR^a_e(G)$.
Thus, from Equation~\eqref{eq:axioms:ed:1} we obtain that this is equivalent to 
$b''(w)=b(w) + a \cdot \sum_{e \in E^-_w(G') \cap E''} F_e(G')$.
In this way, if we substitute it into Equation~\eqref{eq:axioms:ed:3}, we get
\[
    F_{e}(G') = \frac{1}{\deg_{w}^+(G')} \left( a\cdot\sum_{e' \in E^-_w(G')} F_{e'}(G') + b(w)\right) \!\!,
\]
which is PageRank recursive equation for centrality $F(G')$ and edge $e$.
Thus, $F(G')$ satisfies PageRank recursive equation and Edge Deletion follows.

\subsubsection{Edge Multiplication}
Consider edge $e^* \e (u,v) \in E$ such that $E^+_u(G) \subseteq E^-_v(G)$ and
graph $G' = (V ,E',\phi|_{E'},b)$, where $E'=E \setminus E^+_u(G) \cup \{e^*\}$.
Let centrality $F$ be such that
\begin{equation}
    \label{eq:axioms:em:1}
    F_e(G') =
    \begin{cases}
        \deg^+_u(G) \cdot PR^a_{e^*}(G) & \mbox{if } e = e^*,\\
        PR^a_e(G) & \mbox{otherwise.}
    \end{cases}
\end{equation}
We will show that $F(G')$ satisfies PageRank recursive equation (Equation~\eqref{eq:pagerank}) for graph $G'$, 
which will imply that $F(G')=PR^a(G')$ and thus $PR^a_e(G')=PR^a_e(G)$ for every $e \in E' \setminus \{e^*\}$ and $PR^a_{e^*}(G')=\deg^+_u(G) \cdot PR^a_{e^*}(G)$.

Fix arbitrary $e \e (w,w') \in E$.
From Equation~\eqref{eq:pagerank} for PageRank in graph $G$ we get that
\begin{equation}
    \label{eq:axioms:em:2}
    PR^a_{e}(G) = 
        \frac{1}{\deg_{w}^+(G)} \!
        \Bigg( a\cdot \!\sum_{e' \in E^-_w(G) \setminus E^+_u(G)}\! PR^a_{e'}(G) \ + 
    a\cdot\sum_{e' \in E^-_w(G) \cap E^+_u(G)} PR^a_{e'}(G) + b(w)\! \Bigg) \!.
\end{equation}

We will consider four cases:
first, $w \neq u$ and $w \neq v$ (1.1),
next, $w \neq u$, but $w = v$ (1.2),
then $w = u$, but $w \neq v$ (2.1),
and finally $w = u = v$ (2.2).

(1.1) First, let us assume that $w \neq u$.
This means that $e \in E' \setminus \{e^*\}$.
If additionally, $w \neq v$, then we know that $E^-_w(G) \cap E^+_u(G) = \emptyset$ since $E^+_u(G) \subseteq E^-_v(G)$.
From this, we obtain also that $E^-_w(G) \setminus E^+_u(G) = E^-_w(G)=E^-_w(G')$.
Therefore, from Equation~\eqref{eq:axioms:em:1} combined with the fact that $\deg_w^+(G)=\deg_w^+(G')$,
we get that Equation~\eqref{eq:axioms:em:2} is equivalent to PageRank recursive equation for centrality $F(G')$ and edge $e$.

(1.2) If $w = v$,
then from the fact that  $E^+_u(G) \subseteq E^-_v(G)$,
we get $E^-_w(G) \cap E^+_u(G) = E^+_u(G)$.
From Equation~\eqref{eq:pagerank}, each outgoing edge of $u$ has PageRank equal to PageRank of $e^*$, i.e, $PR^a_{e'}(G)=PR^a_{e^*}(G)$ for every $e' \in E^+_u(G)$.
Also, observe that $E^-_w(G) \setminus E^+_u(G) = E^-_w(G') \setminus \{e^*\}$.
Hence, combining this with the fact that $\deg_w^+(G)=\deg_w^+(G')$, Equation~\eqref{eq:axioms:em:2} can be transformed into
\begin{equation*}
    PR^a_{e}(G) = 
        \frac{1}{\deg_{w}^+(G')} \!
        \Bigg( a\cdot \!\sum_{e' \in E^-_w(G') \setminus \{e^*\}}\! PR^a_{e'}(G) \ + 
    a\cdot\deg^+_u(G) \cdot PR^a_{e^*}(G) + b(w)\!\Bigg) \!.
\end{equation*}
And from Equation~\eqref{eq:axioms:em:1} this is equivalent to PageRank recursive equation for centrality $F(G')$ and edge $e$.

(2.1) Now, assume that $w =u$.
Observe that the only outgoing edge of $u$ in $G'$ is $e^*$, thus $e = e^*$.
If $e^*$ is not a self-loop, i.e., $u \neq v$, then $E^-_u(G) \cap E^+_u(G) = \emptyset$ and also $E^-_u(G) \setminus E^+_u(G) = E^-_u(G)=E^-_u(G')$.
Thus, from Equation~\eqref{eq:axioms:em:2} we get
\begin{equation*}
    PR^a_{e^*}(G) = \frac{1}{\deg_{u}^+(G)} \! \Bigg( a\cdot\!\sum_{e' \in E^-_u(G')}\! F_{e'}(G') + b(u) \! \Bigg) \!.
\end{equation*}
By Equation~\eqref{eq:axioms:em:1}, we can transform this equation into
\[
    F^a_{e^*}(G) = a\cdot\!\sum_{e' \in E^-_u(G')}\! F_{e'}(G') + b(u).
\]
Since $\deg^+_u(G')=1$, this equation is PageRank recursive equation for centrality $F(G')$ and edge $e$.

(2.2) Finally, if $e^*$ is a self-loop, then each outgoing edge of $u$ has PageRank equal to PageRank of $e^*$, i.e, $PR^a_{e}(G)=PR^a_{e^*}(G)$ for every $e \in E^+_u(G)$.
Also, observe that $E^-_u(G) \setminus E^+_u(G) = E^-_u(G') \setminus \{e^*\}$.
Hence, Equation~\eqref{eq:axioms:em:2} can be transformed into
\begin{equation*}
    PR^a_{e^*}(G) = 
        \frac{1}{\deg_{u}^+(G)} \!
        \Bigg( a\cdot \!\sum_{e' \in E^-_u(G') \setminus \{e^*\}}\! PR^a_{e'}(G) \ + 
    a\cdot\deg^+_u(G) \cdot PR^a_{e^*}(G) + b(u)\!\Bigg) \!.
\end{equation*}
Finally, by Equation~\eqref{eq:axioms:em:1} we can transform this equation into
\[
    F^a_{e^*}(G) = a\cdot\!\sum_{e' \in E^-_u(G')}\! F_{e'}(G') + b(u).
\]
Since $\deg^+_u(G')=1$, this equation is PageRank recursive equation for centrality $F(G')$ and edge $e$.
Thus, $F(G')$ satisfies PageRank recursive equation and Edge Multiplication follows.

\subsubsection{Edge Swap}
Consider edges, $e_1 \ee (u_1,v_1), e_2 \ee (u_2,v_2) \in E$ such that $PR^a_{e_1}(G)=PR^a_{e_2}(G)$ and let us denote a graph $G'=(V,E,\phi[e_1 \rightarrow (u_1,v_2), e_2 \rightarrow (u_2,v_1)],b)$.
Let centrality $F$ be such that $F_e(G')=PR^a_e(G)$ for every $e \in E$.
We will show that $F(G')$ satisfies PageRank recursive equation (Equation~\eqref{eq:pagerank}) for graph $G'$, 
which will imply that $F(G')=PR^a(G')$ and thus $PR^a_e(G')=PR^a_e(G)$ for every $e \in E$.

Fix arbitrary $e \e (w,w') \in E$.
From Equation~\eqref{eq:pagerank} for PageRank in graph $G$ and the fact that $F_e(G')=PR^a_e(G)$ for every $e \in E$ we get
\begin{equation}
    \label{eq:axioms:es:1}
    F^a_{e}(G') = 
        \frac{1}{\deg_{w}^+(G)} \!
        \Bigg( a\cdot \!\sum_{e' \in E^-_w(G) \setminus \{e_1,e_2\}}\! F_{e'}(G') \ + 
    a\cdot\sum_{e' \in E^-_w(G) \cap \{e_1,e_2\}} F_{e'}(G') + b(w)\! \Bigg) \!.
\end{equation}
Out-degree of every node stays the same in $G$ and $G'$, thus $\deg_w^+(G)=\deg_w^+(G')$.
Edges that are not $e_1$ or $e_2$ are unaffected, hence $E^-_w(G) \setminus \{e_1,e_2\}=E^-_w(G') \setminus \{e_1,e_2\}$.
Finally, observe that $PR^a_{e_1}(G)=PR^a_{e_2}(G)$ implies that $F_{e_1}(G')=F_{e_2}(G')$.
Thus, since number of edges in $\{e_1,e_2\}$ that ends in $w$ remains the same, we have
\begin{equation*}
    \sum_{e' \in E^-_w(G) \cap \{e_1,e_2\}} \! F_{e'}(G') =
    \sum_{e' \in E^-_w(G') \cap \{e_1,e_2\}} F_{e'}(G').
\end{equation*}
Combining all three facts with Equation~\eqref{eq:axioms:es:1}, we get
\begin{equation*}
    F_{e}(G') = 
        \frac{1}{\deg_{w}^+(G')} \!
        \Bigg( a\cdot \!\sum_{e' \in E^-_w(G') \setminus \{e_1,e_2\}}\! F_{e'}(G') \ + 
    a\cdot\sum_{e' \in E^-_w(G') \cap \{e_1,e_2\}} F_{e'}(G') + b(w)\! \Bigg) \!.
\end{equation*}
This equation is PageRank recursive equation for centrality $F(G')$ and edge $e$.
Thus, $F(G')$ satisfies PageRank recursive equation and Edge Swap follows.

\subsubsection{Node Redirect}
Consider out-twins $u,w \in V$ with the bijection $\psi : E^+_w(G) \rightarrow E^+_u(G)$ and let us denote the graph $G' = (V',E',\phi',b') = R_{u \rightarrow w}(G)$.
Let centrality $F$ be such that for every $e \in E'$
\begin{equation}
    \label{eq:axioms:nr:1}
    F_e(G') =
    \begin{cases}
        PR^a_{e}(G) + PR^a_{\psi(e)}(G) & \mbox{if } e \in E^+_w(G),\\
        PR^a_e(G) & \mbox{otherwise.}
    \end{cases}
\end{equation}
We will show that $F(G')$ satisfies PageRank recursive equation (Equation~\eqref{eq:pagerank}) for graph $G'$, 
which will imply that $F(G')=PR^a(G')$ and thus $PR^a_e(G')=PR^a_e(G)$ for every $e \!\in\! E' \setminus E^+_w(G')$ and $PR^a_{e}(G')=PR^a_{e}(G) + PR^a_{\psi(e)}(G)$ for every $e \in E^+_w(G')$.

Fix $e \e (v,v') \in E'$.
Observe that out-degree of each node in $V'$ is the same in both $G$ and $G'$.
Thus, from  Equation~\eqref{eq:pagerank} for PageRank in graph $G$ we get
\begin{equation}
    \label{eq:axioms:nr:2}
    PR^a_{e}(G) \! = \! 
        \frac{1}{\deg_{v}^+(G')} \!
        \Bigg(
            a\cdot \!\sum_{e' \in E^-_v(G) \setminus E^+_{u,w}(G)}\! PR^a_{e'}(G) \ + 
            a\cdot\sum_{e' \in E^-_v(G) \cap E^+_{u,w}(G)} PR^a_{e'}(G) + 
            b(v)\! 
        \Bigg) \!,
\end{equation}
where $E^+_{u,w}(G) = E^+_u(G) \cup E^+_w(G)$.
Now, since $u$ and $w$ are out-twins with bijection $\psi$, it holds that
\begin{equation*}
    \sum_{e' \in E^-_v(G) \cap E^+_{u,w}(G)} PR^a_{e'}(G) = 
    \sum_{e' \in E^-_v(G) \cap E^+_{w}(G)} PR^a_{e'}(G) + PR^a_{\psi(e')}(G).
\end{equation*}
Combining this with Equations~\eqref{eq:axioms:nr:1} and~\eqref{eq:axioms:nr:2} we obtain
\begin{equation}
    \label{eq:axioms:nr:3}
    PR^a_{e}(G) \! = \! 
        \frac{1}{\deg_{v}^+(G')} \!
        \Bigg(
            a\cdot \!\sum_{e' \in E^-_v(G) \setminus E^+_{u,w}(G)}\! F_{e'}(G') \ + 
            a\cdot\sum_{e' \in E^-_v(G) \cap E^+_{w}(G)} F_{e'}(G') + 
            b(v)\! 
        \Bigg) \!.
\end{equation}

If $v \neq w$, then $E^-_v\!( G) \setminus E^+_{u,w}\!( G) \! = \! E^-_v\! ( G') \setminus E^+_{w}\! ( G')$ 
and $E^-_v(G) \cap E^+_{w}(G) = E^-_v(G') \cap E^+_{w}(G')$.
Also, we have $b'(v)=b(v)$.
Moreover, from Equation~\eqref{eq:axioms:nr:1} we get that $F_e(G') = PR^a_e(G)$.
Thus, Equation~\eqref{eq:axioms:nr:3} is equivalent to PageRank recursive equation for $F(G')$ and edge $e$.

If, on the other hand, $v = w$, then let us sidewise add Equation~\eqref{eq:axioms:nr:3} for $e$ with Equation~\eqref{eq:axioms:nr:3} for $\psi(e)$.
Since from Equation~\eqref{eq:axioms:nr:1} we have $F_e(G') = PR^a_e(G) + PR^a_{\psi(e)}(G)$, we obtain
\begin{equation}
    \label{eq:axioms:nr:4}
    F_{e}(G') \! = \! 
        \frac{1}{\deg_{v}^+(G')} \!
        \Bigg(
            a\cdot \!\sum_{e' \in E^-_{u,w}(G) \setminus E^+_{u,w}(G)}\! F_{e'}(G') \ +  \\
            a\cdot\sum_{e' \in E^-_{u,w}(G) \cap E^+_{w}(G)} F_{e'}(G') + 
            b(u) + b(w)\! 
        \Bigg) \!,
\end{equation}
where we denote $E^-_{u,w}(G)=E^-_{u}(G) \cup E^-_{w}(G)$.
Observe that $E^-_{u,w}(G) \setminus E^+_{u,w}(G) = E^-_{w}(G') \setminus E^+_{w}(G')$ and similarily
$E^-_{u,w}(G) \cap E^+_{u,w}(G) = E^-_{w}(G') \cap E^+_{w}(G')$.
Also, $b'(w)= b(u) + b(w)$.
Thus, Equation~\eqref{eq:axioms:nr:4} is equivalent to PageRank recursive equation for centrality $F(G')$ and edge $e$.
Therefore, $F(G')$ satisfies PageRank recursive equation and Node Redirect follows.

\subsubsection{Baseline}
Consider arbitrary isolated edge $e \e (u,v) \in E$.
Observe that $PR^a_e(G) = b(u)$ directly from Equation~\eqref{eq:pagerank}.
\end{proof}

\subsection{Axioms Imply Edge PageRank}
Now, let us focus on proving that an arbitrary centrality measure $F$ satisfying our axioms is indeed Edge PageRank.
To this end, we adopt the following structure of the proof:

\begin{itemize}

\item First, we prove simple properties of $F$ (Proposition~\ref{prop:loc-sw-se}). Specifically, we show that the centrality of an edge depends only on its connected component (\emph{Locality}), does not depend on weights of sinks (\emph{Sink Weight}), and if the edge is a source edge, then it is equal to the weight of its start (\emph{Source Edge}).

\item Then, we analyze simple graphs in which all edges are incident to one node, $v$. 
\begin{itemize}
\item We begin with graphs in which $v$ has one incoming edge (from $u$) and one outgoing edge (to $w$).
First, we show that there exists a constant $a_F$ such that the centrality of the outgoing edge of $v$ equals $a_F$ times $b(u)$ plus $b(v)$ (Lemma~\ref{lemma:two_path_weighted}). Then, we prove that $a_F \in [0,1)$ (Lemma~\ref{lemma:a_f}).

\item Furthermore, we consider graphs in which node $v$ has $k$ outgoing and zero (Lemma~\ref{lemma:star}) or one (Lemma~\ref{lemma:star_edge}) incoming edges.
\end{itemize}    
\item Finally, we prove that for arbitrary graph, $F$ is equal to Edge PageRank with the decay factor $a_F$ (Lemma~\ref{lemma:any_graph}).

\end{itemize}


We begin with a proposition that captures simple properties of an edge centrality measure implied by our axioms.
\begin{apptheorem}{Proposition}{\ref{prop:loc-sw-se}}
If edge centrality measure $F$ satisfies Node Deletion, Edge Deletion, Node Redirect, and Baseline, then for every graph $G = (V, E,\phi, b)$  and $e \in E$ it holds that
\begin{itemize}
    \item[(a)] (Locality) $F_e(G)=F_e(G + G')$ for every $G'$ disjoint with $G$,
    \item[(b)] (Sink Weight) $F_e(G)=F_e(V,E,\phi,b[w \rightarrow 0])$ for every sink $w \in V$,
    \item[(c)] (Source Edge) $F_e(G)=b(u)$ if $e \e (u,v)\in E$ is a source edge.
\end{itemize}
\end{apptheorem}
\begin{proof}
    For (a),
    fix arbitrary graph $G'=(V',E',\phi',b')$ such that $V \cap V' = \emptyset$ and $E \cap E' = \emptyset$.
    Also, consider $G''$ that is $G'$ without any edges, i.e., $G'' = (V',\emptyset, \phi'|_{\emptyset},b')$.
    Since in graph $G+G'$ node $\phi_1(e)$ is not a successor of nodes from $V'$,
    by Edge Deletion we obtain that
    $F_e(G + G')=F_e(G + G'')$.
    Observe that in graph $G + G''$ all nodes from $V'$ are isolated.
    Hence, from Node Deletion we have $F_e(G + G'') = F_e(G)$.
    Combining both equations we get the thesis.
    
    For (b), consider arbitrary sink $w \in V$ and graph $G''$ that is $G$ in which the weight of sink $w$ is set to zero, i.e., let
    $G''=(V,E,\phi,b[w \rightarrow 0])$.
    Also, let us add to $G''$ an isolated node, $w' \not \in V$, with the weight of the original node $w$, i.e., let
    $G'=G'' + (\{w'\},\emptyset,\phi|_\emptyset,[b(w)])$.
    From Locality (a), we get that $F_e(G'')=F_e(G')$.
    On the other hand, observe that graph $G$ can be obtained from $G'$ by redirecting $w'$ into $w$, i.e.,
    $G = R_{w' \rightarrow w}(G')$.
    Since both nodes are sinks, they are out-twins.
    Thus, from Node Redirect $F_e(G') = F_e(G)$.
    Combining both equations we get the thesis.
    
    Finally, for (c), consider graph $G$ with all edges removed except for $e$, i.e., $G'=(V,e,\phi|_{e},b)$.
    Observe that since $e$ is a source edge, its start, $u$, is not a successor of any node.
    Thus, from Edge Deletion, we get $F_e(G) = F_e(G')$.
    On the other hand, in graph $G'$ edge $e$ is isolated.
    Thus, from Baseline we get $F_e(G')=b(u)$.
    Combining both equations we get the thesis.
\end{proof}

In the following lemma, we introduce constant $a_F$ that, as we will prove later on, is decay factor of the Edge PageRank to which $F$ is equal to.

\begin{apptheorem}{Lemma}{\ref{lemma:two_path_weighted}}
If edge centrality measure $F$ satisfies Node Deletion, Edge Deletion, Node Redirect, and Baseline, then there exists a constant, $a_F \in \mathbb{R}$, such that for every graph
$G \!=\! (\{u, v, w\},  \{e_1 \ee (u,v), e_2 \ee (v, w)  \},  [x, y, 0])$ it holds that
\[
    F_{e_1} (G) = x \quad \mbox{and} \quad F_{e_2} (G) = a_F \cdot x + y.
\]
\end{apptheorem}

\begin{proof}[Proof]
Note that first equation follows directly from Source Edge (Proposition~\ref{prop:loc-sw-se}c).
Thus, we will focus on proving that $F_{e_2} (G) = a_F \cdot x + y$.
First, assume $y=0$ and consider graph
$G' \!\!=\! ( \{u'\!, v'\!, w'\}, \! \{e'_1 \ee (u' \! , v'), e'_2 \ee (v'\!, w') \! \}, \! [x'\!, 0, 0])$
such that $u',v',w' \not \in \{u,v,w\}$
(see Figure~\ref{fig:app:two_path_weighted:1}).
By Locality (Proposition~\ref{prop:loc-sw-se}a), we know that in $G+G'$ the centrality of all edges is the same as in $G$ or $G'$.
Specifically,
\begin{equation}
    \label{eq:lemma:2-path:1}
    F_{e_2}(G + G') = F_{e_2}(G)
    \quad \mbox{and} \quad
    F_{e'_2}(G + G') = F_{e'_2}(G').
\end{equation}

\begin{figure*}[t]
\centering
\begin{tikzpicture}[x=4cm,y=4cm] 
  \tikzset{     
    e4c node/.style={circle,draw,minimum size=0.6cm,inner sep=0},
    e4c edge/.style={above,font=\footnotesize},
    e4cn edge/.style={above,font=\footnotesize, pos=0.6}
  }
  
  \def\x{0};
  \def\y{0};
  \node (G) at (\x+0.3, \y+0.47) {$G$}; 
  \node (G_) at (\x+0.31, \y+0.13) {$G'$}; 
  \node[e4c node] (1) at (\x+0, \y+0.3) {$u$}; 
  \node[e4c node] (2) at (\x+0.3, \y+0.3) {$v$}; 
  \node[e4c node] (3) at (\x+0.6, \y+0.3) {$w$}; 
  \node[e4c node] (4) at (\x+0, \y+0) {$u'$}; 
  \node[e4c node] (5) at (\x+0.3, \y+0) {$v'$}; 
  \node[e4c node] (6) at (\x+0.6, \y+0) {$w'$}; 

  \path[->,draw,thick]
  (1) edge[e4c edge] node  {$e_1$}  (2) 
  (2) edge[e4c edge] node  {$e_2$}  (3)
  (4) edge[e4c edge] node  {$e'_1$} (5)
  (5) edge[e4c edge] node  {$e'_2$} (6)
  ;

  \def\x{0.9};
  \node (G) at (\x+0.3, \y+0.47) {$R_{w' \rightarrow w}(G+G')$};
  \node[e4c node] (1) at (\x+0, \y+0.3) {$u$}; 
  \node[e4c node] (2) at (\x+0.3, \y+0.3) {$v$}; 
  \node[e4c node] (3) at (\x+0.6, \y+0.3) {$w$}; 
  \node[e4c node] (4) at (\x+0, \y+0) {$u'$}; 
  \node[e4c node] (5) at (\x+0.3, \y+0) {$v'$}; 

  \path[->,draw,thick]
  (1) edge[e4c edge] node  {$e_1$}  (2) 
  (2) edge[e4c edge] node  {$e_2$}  (3)
  (4) edge[e4c edge] node  {$e'_1$} (5)
  (5) edge[e4c edge] node[pos =0.3]  {$e'_2$} (3)
  ;
  
  \def\x{1.8};
  \node (G) at (\x+0.3, \y+0.47) {$R_{v' \rightarrow v}(R_{w' \rightarrow w}(G+G'))$};
  \node[e4c node] (1) at (\x+0, \y+0.3) {$u$}; 
  \node[e4c node] (2) at (\x+0.3, \y+0.3) {$v$}; 
  \node[e4c node] (3) at (\x+0.6, \y+0.3) {$w$}; 
  \node[e4c node] (4) at (\x+0, \y+0) {$u'$};

  \path[->,draw,thick]
  (1) edge[e4c edge] node  {$e_1$}  (2) 
  (2) edge[e4c edge] node  {$e_2$}  (3)
  (4) edge[e4c edge] node[pos =0.3]  {$e'_1$} (2)
  ;
  
  \def\x{2.7};
  \node (G) at (\x+0.3, \y+0.47) {$G''$};
  \node[e4c node] (1) at (\x+0, \y+0.3) {$u$}; 
  \node[e4c node] (2) at (\x+0.3, \y+0.3) {$v$}; 
  \node[e4c node] (3) at (\x+0.6, \y+0.3) {$w$};

  \path[->,draw,thick]
  (1) edge[e4c edge] node  {$e_1$}  (2) 
  (2) edge[e4c edge] node  {$e_2$}  (3)
  ;
  


\end{tikzpicture}
\caption{An illustration to the first part of the proof of Lemma~\ref{lemma:two_path_weighted}.}
\label{fig:app:two_path_weighted:1}
\end{figure*}
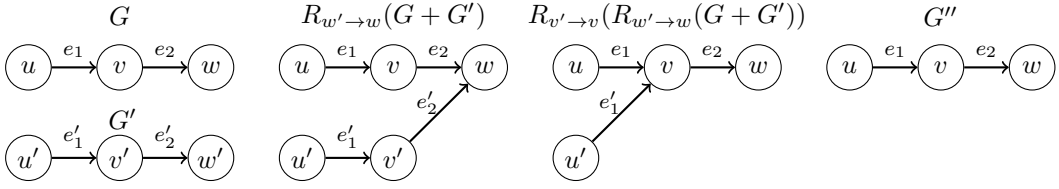

Now, in graph $G+G'$, we sequentially redirect nodes $w',v',u'$ into $w,v,u$ respectively, to again obtain graph $G$ but with different weight of node $u$.
Formally, $G'' = R_{w' \rightarrow w}(R_{v' \rightarrow v}(R_{u' \rightarrow u}(G)))$.
It is easy to check, that in fact
$G'' \! \!=\! ( \{u,\! v,\! w\}, \! \{e_1 \ee (u,v), e_2 \ee (v, w) \! \}, \! [x + x'\! , 0, 0])$.
Moreover, observe that in each redirection, we  redirected a node into its out-twin.
Thus, from Node Redirect and Equation~\eqref{eq:lemma:2-path:1}, we obtain that 
\begin{multline}
    \label{eq:lemma:2-path:2}
    F_{e_2}(\!\{u,\! v,\! w\}, \! \{e_1 \ee (u,v), e_2 \ee (v, w) \! \}, \! [x + x'\! , 0, 0])= \\
        F_{e_2}(\!\{u, v, w\},  \{e_1 \ee (u,v), e_2 \ee (v, w) \! \},  [x, 0, 0])) +
        F_{e'_2}(\!\{u'\!, v'\!, w'\}, \! \{e'_1 \ee (u' \! , v'), e'_2 \ee (v'\!, w') \! \}, \! [x'\!, 0, 0]).
\end{multline}
This equation implies the following:
\begin{itemize}
    \item[i)]
    $F_{e'_2}(\{u'\!, v'\!, w'\}, \! \{e'_1 \e (u' \! , v'), e'_2 \e (v'\!, w')  \},  [0, 0, 0]) = 0$,\\
    from Equation~\eqref{eq:lemma:2-path:2} for $x'=0$ and arbitrary $x \in \mathbb{R}_{\ge 0}$,
    \item[ii)]
    $F_{e_2}( \{u, v, w\}, \{e_1 \e (u , v), e_2 \e (v, w) \},  [0, 0, 0]) = 0$,\\
    from \emph{i)} and the arbitrariness of the choice of nodes and edges,
    \item[iii)]
    $F_{e_2}( \{u, v, w\}, \{e_1 \e (u , v), e_2 \e (v, w) \},  [c, 0, 0]) =$\\
    $F_{e'_2}(\{u'\!, v'\!, w'\}, \! \{e'_1 \e (u' \! , v'), e'_2 \e (v'\!, w')  \},  [c, 0, 0])$\\
    for every $c \in \mathbb{R}_{\ge 0}$, from \emph{ii)} and Equation~\eqref{eq:lemma:2-path:2} for $x=0$ and $x' = c$, and finally
    \item[iv)]
    $F_{e_2}( \{u, v, w\}, \{e_1 \e (u , v), e_2 \e (v, w) \},  [c, 0, 0]) +$\\
    $F_{e_2}( \{u, v, w\}, \{e_1 \e (u , v), e_2 \e (v, w) \},  [c', 0, 0]) =$\\
    $F_{e_2}( \{u, v, w\}, \{e_1 \e (u , v), e_2 \e (v, w) \},  [c + c', 0, 0]),$\\
    for every $c,c' \in \mathbb{R}_{\ge 0}$, from \emph{iii)} and Equation~\eqref{eq:lemma:2-path:2} for $x=c$ and $x' = c'$.
\end{itemize}
Observe that from \emph{iii)} we get that centrality $F_{e_2}(G)$ depends solely on the value of $x$, i.e.,
there exists a function $f : \mathbb{R}_{\ge 0} \rightarrow \mathbb{R}_{\ge 0}$, such that $F_{e_2}(G)=f(x)$.
Also, from \emph{iv)} we know that this function is additive.
Moreover, centrality is always non-negative, thus $f$ is non-negative.
Therefore, since additive and non-negative function is always linear~\cite{Cauchy:1821}, we get that $f$ is of the form $f(x)=a_F \cdot x$.

  
  





It remains to consider the case in which $y$ is not necessarily equal to 0.
To this end, consider the following graph:
$G^* \! \!=\! ( \{u,\! v,\! v' \!,\! w\}, \! \{e_1 \ee (u,\! v), e_2 \ee (v,\! w), e_3 \ee (v' \!,\! w)  \}, \! [x, 0, y, 0])$ (see Figure~\ref{fig:app:two_path_weighted:2}).
Observe that from Source Edge (Proposition~\ref{prop:loc-sw-se}c) we get that 
\begin{equation}
    \label{eq:lemma:2-path:3}
    F_{e_3}(G^*)=y.
\end{equation}

Now, let us remove edge $e_3$ and then node $v'$.
In result, we obtain graph
$G^\star \! \!=\! ( \{u,\! v, \! w\}, \! \{e_1 \ee (u,\! v), e_2 \ee (v,\! w)  \}, \! [x, 0, 0])$.
Since $v$ is not a successor of $v'$ and after removal of $e_3$ node $v'$ is isolated, from Edge Deletion and Node Deletion, we get that $F_{e_2}(G^*)=F_{e_2}(G^\star)$.
%
On the other hand, observe that $G^\star$ is a type of graph that we have considered in the first part of the proof.
Thus, we get that $F_{e_2}(G^\star)=a_F \cdot x$.
Therefore, 
\begin{equation}
    \label{eq:lemma:2-path:5}
    F_{e_2}(G^*)=a_F \cdot x.
\end{equation}

Finally, observe that by redirecting node $v'$ into $v$ in graph $G^*$ we obtain the original graph, $G$.
Since $v'$ and $v$ are out-twins in $G^*$ (both have one edge to node $w$), from Node Redirect we get that
$F_{e_2}(G)=F_{e_2}(G^*) + F_{e_3}(G^*)$.
Combining this with Equation~\eqref{eq:lemma:2-path:3} and~\eqref{eq:lemma:2-path:5} yields the thesis.
\end{proof}

Now, let us show the bounds for constant $a_F$.


  



\begin{figure}[t]
\centering
\begin{minipage}{.5\textwidth}
    \centering
    \begin{tikzpicture}[x=4cm,y=4cm] 
      \tikzset{     
        e4c node/.style={circle,draw,minimum size=0.6cm,inner sep=0},
        e4c edge/.style={above,font=\footnotesize},
        e4cn edge/.style={above,font=\footnotesize, pos=0.6}
      }
      
      \def\x{0};
      \def\y{0};
      
      \node (G) at (\x+0.2, \y+0.45) {$G^*$}; 
      \node[e4c node] (1) at (\x+0, \y+0.3) {$u$}; 
      \node[e4c node] (2) at (\x+0.25, \y+0.3) {$v$}; 
      \node[e4c node] (3) at (\x+0.5, \y+0.3) {$w$}; 
      \node[e4c node] (5) at (\x+0.25, \y+0) {$v'$};
    
      \path[->,draw,thick]
      (1) edge[e4c edge] node {$e_1$} (2) 
      (2) edge[e4c edge] node  {$e_2$}  (3)
      (5) edge[e4cn edge, pos=0.6, below] node [yshift=-2] {$e_3$}  (3)
      ;
    
      \def\x{0.8};
      \node (G) at (\x+0.2, \y+0.45) {$G^\star$}; 
      \node[e4c node] (1) at (\x+0, \y+0.3) {$u$}; 
      \node[e4c node] (2) at (\x+0.25, \y+0.3) {$v$}; 
      \node[e4c node] (3) at (\x+0.5, \y+0.3) {$w$}; 
    
      \path[->,draw,thick]
      (1) edge[e4c edge] node {$e_1$} (2) 
      (2) edge[e4c edge] node  {$e_2$}  (3)
      ;
    
    \end{tikzpicture}
    \captionof{figure}{An illustration to the second part of the\\ proof of Lemma~\ref{lemma:two_path_weighted}.}
    \label{fig:app:two_path_weighted:2}
\end{minipage}%
\begin{minipage}{.5\textwidth}
    \centering
    \begin{tikzpicture}[x=4cm,y=4cm] 
      \tikzset{     
        e4c node/.style={circle,draw,minimum size=0.6cm,inner sep=0}, 
        e4c edge/.style={sloped,above,font=\footnotesize},
        e4cn edge/.style={above,pos=0.6,font=\footnotesize}
      }
      \def\x{0}
      \def\h{0.3}
      \node (G) at (\x-0.1, \h + 0.05) {$G$}; 
      \node[e4c node] (1) at (\x+0.2, \h) {$u$}; 
      \node[e4c node] (2) at (\x+0.00, 0) {$w$}; 
      \node[e4c node] (3) at (\x+0.40, 0) {$v$};

      \path[->,draw,thick]
      (3) edge[e4c edge, loop above] node  {$e_2$} (3)
      (1) edge[e4cn edge]  node  [xshift=-3]  {$e_1$} (2)
      ;
      
      \def\x{0.8}
      \node (G) at (\x-0.1, \h + 0.05) {$G'$}; 
      \node[e4c node] (1) at (\x+0.2, \h) {$u$}; 
      \node[e4c node] (2) at (\x+0.00, 0) {$w$}; 
      \node[e4c node] (3) at (\x+0.40, 0) {$v$};

      \path[->,draw,thick]
      (3) edge[e4c edge] node {$e_2$} (2)
      (1) edge[e4cn edge]  node  [xshift=3] {$e_1$} (3)
      ;
    
    \end{tikzpicture}
    \captionof{figure}{An illustration to the proof of Lemma~\ref{lemma:a_f}.}
    \label{fig:app:a_f}
\end{minipage}
\end{figure}

\begin{apptheorem}{Lemma}{\ref{lemma:a_f}}
If edge centrality measure $F$ satisfies Node Deletion, Edge Deletion, Edge Swap, Node Redirect, and Baseline, then $a_F \in [0,1)$.
\end{apptheorem}
\begin{proof}[Proof]
First, to show that $a_F \ge 0$ let us consider graph
$G \!=\! (\{u, v, w\},  \{e_1 \ee (u,v), e_2 \ee (v, w)  \},  [0, y, 0])$
for any $y \in \mathbb{R}_{>0}$.
From Lemma~\ref{lemma:two_path_weighted} we get that $F_{e_2}(G)=a_F \cdot y$.
Since centrality is non-negative and $y>0$ we get that $a_F$ is also non-negative.

Thus, in the remainder of the proof, we focus on showing that $a_F<1$.
To this end, let us consider graph
$G=(\{u,v,w\},\{e_1 \ee (u,w), e_2 \ee (v,v) \},[x,1,0])$ (see Figure~\ref{fig:app:a_f}).
From Locality (Proposition~\ref{prop:loc-sw-se}a), we get that 
\[
    F_{e_2}(G) = F_{e_2}(\{v\},\{e_2 \e (v,v) \},[1]). 
\]
Since this value does not depend on $x$, we can take $x$ that is equal to this value, i.e., $x = F_{e_2}(G)$.
Then, from Source Edge (Proposition~\ref{prop:loc-sw-se}c) we obtain $F_{e_1}(G)= x = F_{e_2}(G)$.
Hence, by Edge Swap, exchanging the ends of these two edges does not affect their centralities.
Formally, if we take graph
$G'=(\{u,v,w\},\{e_1 \ee (u,v), e_2 \ee (v,w) \},[x,1,0])$, then from Edge Swap we have
\begin{equation}
    \label{eq:lemma:a_f:1}
    F_{e_2}(G')=F_{e_2}(G).
\end{equation}
On the other hand, from Lemma~\ref{lemma:two_path_weighted} for graph $G'$, we have
$F_{e_2}(G')=1 + a_F \cdot x$.
Since we took $x=F_{e_2}(G)$, this means that
\begin{equation}
    \label{eq:lemma:a_f:2}
    F_{e_2}(G') = 1 + a_F \cdot F_{e_2}(G).
\end{equation}
Now, combining Equation~\eqref{eq:lemma:a_f:1} with Equation~\eqref{eq:lemma:a_f:2} we obtain that
\(
    F_{e_2}(G) = 1 + a_F \cdot F_{e_2}(G),
\)
which implies that $a_F \neq 1$.
This means that we can further transform this equation into
\(
    F_{e_2}(G) = 1/(1-a_F).
\)
Since centrality is non-negative, we get $a_F<1$.
\end{proof}

In Lemmas~\ref{lemma:star} and \ref{lemma:star_edge} we consider graphs in which $v$ has multiple outgoing edges.
First, we assume that it does not have any incoming edge.

\begin{apptheorem}{Lemma}{\ref{lemma:star}}
If edge centrality measure $F$ satisfies Node Deletion, Edge Deletion, Edge Multiplication, Node Redirect, and Baseline, then for every $k\in \mathbb{N}$ and graph 
$G \!=\! (\!\{v, w_1,\dots, w_k\},\! \{e_1 \ee (v,w_1), \dots, e_k \ee (v,w_k) \!\}, \! [x, 0, \dots, 0])$,
it holds that
\[
    F_{e_i}(G) = x / k \quad \mbox{for every } i\in \{1,\dots, k\}.
\]
\end{apptheorem}

\begin{proof}
Let us fix arbitrary $i \in \{ 1, \dots, k\}$.
Since nodes $w_1,\dots,w_k$ are all sinks, they are also out-twins.
Hence, from Node Redirect, we know that sequentially redirecting nodes $w_2,\dots,w_k$ into node $w_1$ preserves the centrality of edges $e_1,\dots,e_k$.
Therefore, in the obtained graph,
\(
    G' \! = ( \{v, w_1\},\! \{e_1 \ee (v,w_1), \dots, e_k \ee (v,w_1) \}, \! [x, 0]),
\)
we have $F_{e_i}(G) = F_{e_i}(G')$.
See Figure~\ref{fig:app:star} for an illustration.

Next, consider graph
\(
    G_i = ( \{v, w_1\},\!\{e_i \e (v,w_1) \}, \! [x, 0]).
\)
Observe that $G_i$ can be obtained from $G'$ by removing all edges but $e_i$.
Thus, from Edge Multiplication, we get $F_{e_i}(G')=F_{e_i}(G_i) / k$.
On the other hand, from Baseline, $F_{e_i}(G_i)=x$.
Combining all equations, we obtain that
\[
    F_{e_i}(G) = F_{e_i}(G') = \frac{F_{e_i}(G_i)}{k} = \frac{x}{k}.
\]
\end{proof}

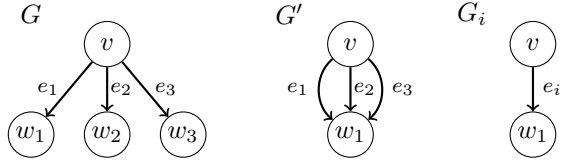
\begin{figure}[t]
\centering
\begin{tikzpicture}[x=4cm,y=4cm] 
  \tikzset{     
    e4c node/.style={circle,draw,minimum size=0.6cm,inner sep=0},
    e4c edge/.style={right,font=\footnotesize},
    e4cn edge/.style={above,font=\footnotesize}
  }
  
  \def\xdist{0.25}
  \def\ydist{0.3}
  \def\b{20} 
  
  \def\x{0};
  \def\y{0};
  \node (G) at (\x, \y+\ydist + 0.1) {$G$};
  \node[e4c node] (2) at (\x+\xdist, \y+\ydist) {$v$};
  \node[e4c node] (3) at (\x+0, \y) {$w_1$}; 
  \node[e4c node] (4) at (\x+\xdist, \y) {$w_2$};
  \node[e4c node] (5) at (\x+2*\xdist, \y) {$w_3$}; 

  \path[->,draw,thick]
  (2) edge[e4c edge] node [left]  {$e_1$}  (3)
  (2) edge[e4c edge] node  [xshift=-2.5]  {$e_2$}  (4)
  (2) edge[e4c edge] node  {$e_3$} (5)
  ;

  \def\x{0.8};
  \node (G) at (\x + \xdist - 0.2, \y+\ydist + 0.1) {$G'$};
  \node[e4c node] (2) at (\x+\xdist, \y+\ydist) {$v$};
  \node[e4c node] (3) at (\x+\xdist, \y) {$w_1$};
  
  \path[->,draw,thick]
  (2) edge[e4c edge, bend right=50]  node [left]  {$e_1$} (3)
  (2) edge[e4c edge] node  [xshift=-2.5]  {$e_2$} (3)
  (2) edge[e4c edge, bend left=50] node  {$e_3$} (3)
  ;

  \def\x{1.4};
  \node (G) at (\x + \xdist - 0.2, \y+\ydist + 0.1) {$G_i$};
  \node[e4c node] (2) at (\x+\xdist, \y+\ydist) {$v$};
  \node[e4c node] (3) at (\x+\xdist, \y) {$w_1$};
  
  \path[->,draw,thick]
  (2) edge[e4c edge]  node  {$e_i$}  (3)
  ;
\end{tikzpicture}
\caption{An illustration to the proof of Lemma~\ref{lemma:star} for $k=3$.}
\label{fig:app:star}
\end{figure}

Now, we assume that $v$ has exactly one incoming edge and multiple outgoing edges.

\begin{apptheorem}{Lemma}{\ref{lemma:star_edge}}
If edge centrality measure $F$ satisfies Node Deletion, Edge Deletion, Edge Multiplication, Node Redirect, and Baseline, then for every $k\in \mathbb{N}$ and graph 
\[\begin{split}
G = (&\{u, v, w_1,\dots w_k\}, \\
&\{e \e (u,v), e_1 \e (v,w_1), \dots, e_k \e (v,w_k)\}, \\
&[x, y, 0, \dots, 0]), 
\end{split}\]
it holds that $F_e(G) = x$ and $F_{e_i}(G) = (a_F \cdot x + y) / k$ for every $i\in \{1,\dots, k\}$.
\end{apptheorem}
\begin{proof}[Proof]
The proof is analogous to the proof of Lemma~\ref{lemma:star} with only two differences:
first, in every graph node $v$ has one incoming edge from $u$;
second, the final equation for graphs $G_i$ is of the form $F_{e_i}(G_i)= a_F \cdot x + y$.

Let us fix arbitrary $i \in \{ 1, \dots, k\}$.
Since nodes $w_1,\dots,w_k$ are all sinks, they are also out-twins.
Hence, from Node Redirect, we know that sequentially redirecting nodes $w_2,\dots,w_k$ into node $w_1$ preserves the centrality of edges $e_1,\dots,e_k$.
Therefore, in the obtained graph,
\(
    G' \! = ( \{u, v, w_1\},\! \{e \ee (u,v), e_1 \ee (v,w_1), \dots, e_k \ee (v,w_1) \}, \! [x,y, 0]),
\)
we have $F_{e_i}(G) = F_{e_i}(G')$.

Let
\(
    G_i = ( \{u, v, w_1\},\!\{e \ee (u,v), e_i \ee (v,w_1) \}, [x, y, 0]).
\)
Observe that $G_i$ can be obtained from $G'$ by removing all edges but $e_i$.
Thus, from Edge Multiplication, we get $F_{e_i}(G')=F_{e_i}(G_i) / k$.
On the other hand, from Lemma~\ref{lemma:two_path_weighted} we get, $F_{e_i}(G_i)=x + a_F \cdot y$.
Combining all equations, we obtain that
\[
    F_{e_i}(G) = F_{e_i}(G') = \frac{F_{e_i}(G_i)}{k} = \frac{x + a_F \cdot y}{k}.
\]
\end{proof}

We are now ready for the final lemma of this section.

\begin{apptheorem}{Lemma}{\ref{lemma:any_graph}}
If edge centrality measure $F$ satisfies Node Deletion, Edge Deletion, Edge Multiplication, Edge Swap, Node Redirect, and Baseline, then $F$ is Edge PageRank.
\end{apptheorem}
\begin{proof}[Proof (Sketch)]
We will prove that $F$ satisfies Edge PageRank recursive equation (Equation~\eqref{eq:pagerank}) with decay factor $a_F$ for every graph $G = (V, E,\phi, b)$  and edge $e \e (v,w) \in E$.
Since this equation uniquely define Edge PageRank, it will imply that $F$ is indeed Edge PageRank.

First assume that $v$ does not have incoming edges.
Then, let us consider graph $G^*$ that is $G$ in which we remove all edges not incident to $v$ and the nodes that become isolated in doing so.
Formally, let $G^*=(V^*,E^+_v(G),\phi|_{E_v^+(G)},b_{V^*})$, where
$V^* = \{\phi_2(e') : e' \in E^+_v(G)\} \cup \{v\}$.
From Edge Deletion and Node Deletion, we get that
\(
    F_e(G) = F_e(G^*).
\)

Observe that the obtained graph is a graph from Lemma~\ref{lemma:star} except sinks might have non-zero node weights.
Thus, let us consider graph $G^\star$ in which we set the weight of all sinks to zero, 
i.e., $G^\star=(V^*,E^+_v(G),\phi|_{E_v^+(G)},b^\star)$, where we have $b^\star(v)=b(v)$ and $b^\star(u)=0$ for every $u \in V' \setminus \{v\}$.
From Sink Weight (Proposition~\ref{prop:loc-sw-se}b), 
\(
    F_e(G^*) = F_e(G^\star),
\)
which means that
\(
    F_e(G) = F_e(G^\star).
\)
On the other hand, from Lemma~\ref{lemma:star} we obtain that $F_e(G'')=b(v)/\deg^+_v(G'')$.
Thus, since $\deg^+_v(G'') = \deg^+_v(G)$, we get
\[
    F_e(G) = \frac{1}{\deg^+_v(G)} \cdot b(v),
\]
which is Equation~\eqref{eq:pagerank} for centrality $F(G)$ and edge $e$.

In the remainder of the proof, let us assume that $v$ has $m>0$ incoming edges, i.e., $E^-_v(G) = \{e_1,\dots,e_m$\}. 
We denote their centralities by $x_i = F_{e_i}(G)$ for every $i \in \{1,\dots,m\}$.
In what follows, through several operations we transform graph $G$ into graph from Lemma~\ref{lemma:star_edge} such that the centrality of $e$ is unchanged.

To this end, for every $i \in \{1,\dots,m\}$, we add to $G$ a simple one-edge graph $G_i=( \{u_i,v_i\},\{e'_i \e (u_i,v_i)\},[x_i,0])$
such that $G_1,\dots,G_m$ are pairwise disjoint and disjoint from G
(see Figure~\ref{fig:app:any_graph}).
Formally, let $G^\circ=(V^\circ,E^\circ,\phi^\circ,b^\circ) = G + G_1 + \dots + G_m$.
Observe that from Locality (Proposition~\ref{prop:loc-sw-se}a) we have that 
$F_{e_i}(G)=x_i$ for every $i \in \{1,\dots,m\}$.
Moreover, for every $i \in \{1,\dots,m\}$ we have $F_{e'_i}(G)=x_i$ from Baseline.
Thus, we obtain that
\begin{equation}
    \label{eq:lemma:any_graph:1}
    F_{e_i}(G^\circ)= x_i =F_{e'_i}(G^\circ) \quad \mbox{for every } i \in \{1,\dots,m\}.
\end{equation}
Observe that from Locality (Proposition~\ref{prop:loc-sw-se}a), we obtain also that
\(
    F_e(G)=F_e(G^\circ).
\)

\begin{figure*}[t]
\centering
\begin{tikzpicture}[x=3cm,y=3cm] 
  \tikzset{     
    e4c node/.style={circle,draw,minimum size=0.54cm,inner sep=0}, 
    e4c edge/.style={above,font=\footnotesize}
  }
  \def\dist{2}
  \def\d{0}
  \def\shift{0.1}
  \def\x{0.35}
  \def\xx{0.5}
  \def\y{0.18}
  \def\yy{0.38}
  \node (G) at (\d - 0.1, \y + 0.15) {$G$};
  \node (G1) at (\d + \shift + \x - 0.18, \yy + 0.1) {$G_1$};
  \node (G2) at (\d + \shift + \x - 0.18, -\yy + 0.1) {$G_2$};
  \node[e4c node] (u_1) at (\d, \y) {}; 
  \node[e4c node] (u_2) at (\d, -\y) {};
  \node[e4c node] (v1) at (\d + \shift + \x, \yy) {$v_1$};
  \node[e4c node] (v2) at (\d + \shift + \x, -\yy) {$v_2$};
  \node[e4c node] (v) at (\d + \shift + \xx, 0) {$v$};
  \node[e4c node] (u1) at (\d + \shift + 2*\x, \yy) {$u_1$};
  \node[e4c node] (u2) at (\d + \shift + 2*\x, -\yy) {$u_2$};
  \node[e4c node] (w_1) at (\d + 2*\xx, \y) {$w$};
  \node[e4c node] (w_2) at (\d + 2*\xx, -\y) {};
  
  \path[->,draw,thick, e4c edge]
  (u_1) edge node[above, pos=0.4] {$e_1$} (v)
  (u_1) edge (u_2)
  (u_2) edge node[above, pos=0.4] {$e_2$} (v)
  (v) edge node[above, pos=0.4] {$e$}  (w_1)
  (v) edge (w_2)
  (w_2) edge (w_1)
  (u1) edge node[above] {$e'_1$} (v1)
  (u2) edge node[above] {$e'_2$} (v2)
  ;

  \def\d{1*\dist}
  \node (G) at (\d -0.05, \yy + 0.1) {$G'$};
  \node[e4c node] (u_1) at (\d, \y) {}; 
  \node[e4c node] (u_2) at (\d, -\y) {};
  \node[e4c node] (v1) at (\d + \shift + \x, \yy) {$v_1$};
  \node[e4c node] (v2) at (\d + \shift + \x, -\yy) {$v_2$};
  \node[e4c node] (v) at (\d + \shift + \xx, 0) {$v$};
  \node[e4c node] (u1) at (\d + \shift + 2*\x, \yy) {$u_1$};
  \node[e4c node] (u2) at (\d + \shift + 2*\x, -\yy) {$u_2$};
  \node[e4c node] (w_1) at (\d + 2*\xx, \y) {$w$};
  \node[e4c node] (w_2) at (\d + 2*\xx, -\y) {};
  
  \path[->,draw,thick, e4c edge]
  (u_1) edge node[above, pos=0.4] {$e_1$} (v1)
  (u_1) edge (u_2)
  (u_2) edge node[above, pos=0.4] {$e_2$} (v2)
  (v) edge node[above, pos=0.4] {$e$}  (w_1)
  (v) edge (w_2)
  (w_2) edge (w_1)
  (u1) edge node[left, pos=0.3] {$e'_1$} (v)
  (u2) edge node[left, pos=0.3] {$e'_2$} (v)
  ;

\end{tikzpicture}
\caption{The scheme of the first part of the proof of Lemma~\ref{lemma:any_graph} for an example graph, $G$, with $m=2$.}
\label{fig:app:any_graph}
\end{figure*}
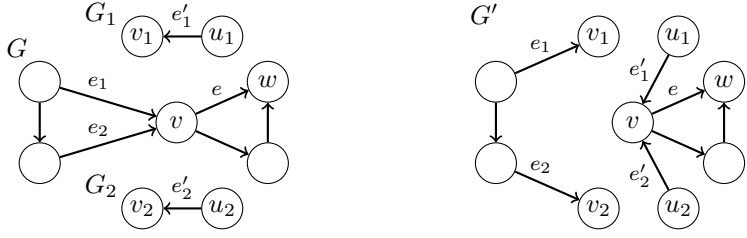

From Equation~\eqref{eq:lemma:any_graph:1}, we see that for every pair of edges $e_i$ and $e'_i$ we can exchange their ends and by Edge Swap we won't affect the centrality of any edge.
Exchanging ends sequentially for all such pairs, we obtain a graph, $G'$, in which incoming edges of $v$, i.e., $E^-_v(G')=\{e'_1,\dots,e'_m\}$, are all source edges, but have the original centralities $x_1,\dots,x_m$ respectively.
Formally, consider
$G' = (V^\circ,E^\circ,\phi',b^\circ)$, where
$\phi'(e_i) = (\phi_1(e_i),v_i)$ and $\phi'(e'_i) = (u_i,v)$ for every $i \in \{1,\dots,m\}$ while $\phi'(e')=\phi^\circ(e')$ for every $e' \in E \setminus E^-_v(G)$.
Then, from Edge Swap, we get that
\(
    F_e(G^\circ)=F_e(G').
\)

Now, observe that in graph $G'$, nodes $u_1,\dots,u_m$ are out-twins (each has one outgoing edge to $v$).
Thus, from Node Redirect, we can sequentially redirect nodes $u_2,\dots,u_m$ into node $u_1$ without affecting the centrality of edge $e$.
Formally, let us denote the graph that we obtain in result by 
$G'' = (V'',E'',\phi'',b'') = R_{u_2 \rightarrow u_1}(\dots(R_{u_m \rightarrow u_1}(G'))\dots)$.
From Node Redirect, we get that
\(
    F_e(G') = F_e(G'').
\)

Next, observe that in graph $G''$ the weight of node $u_1$ is equal to $b''(u_1) = \sum_{i=1}^m b'(u_i) = \sum_{i=1}^m x_i$ (since we redirected all nodes $u_2,\dots,u_m$ into it).
Also, node $v$ has one incoming edge, $e'_1$, that is a source edge.
Thus, by Edge Deletion, removing any edge that is not $e'_1$ or an outgoing edge of $v$ does not affect the centrality of $e$.
Hence, let us remove all edges that are not incident with $v$ and nodes that become isolated in doing so.
Formally, $G^* = (V^*, E^*,\phi''|_{E^*},b''|_{V^*})$, where
$V^* = \{\phi_2(e') : e' \in E^+_v(G)\} \cup \{u_1,v\}$ and 
$E^* = E^+_v(G) \cup \{e'_1\}$.
From Edge Deletion and Node Deletion we get
\(
    F_e(G'') = F_e(G^*).
\)

Observe that the obtained graph is a graph from Lemma~\ref{lemma:star_edge} excepts sinks might have non-zero weights.
Thus, consider $G^\star$ that is $G^*$ in which we set the weights of all sinks to zero, i.e., let
$G^\star = (V^*, E^*,\phi''_{E^*},b^\star)$, where
$b^\star(u_1)=\sum_{i=1}^m x_i$, $b^\star(v) = b(v)$, and $b^\star(u)=0$ for every $u \in V^* \setminus \{u_1,v)$.
From Sink Weight (Proposition~\ref{prop:loc-sw-se}b), we $F_e(G^*) = F_e(G^\star)$, which means that
$F_e(G)=F_e(G^\star)$.
On the other hand, from Lemma~\ref{lemma:star_edge} we get that
\begin{equation}
    \label{eq:lemma:any_graph:2}
    F_e(G^\star) = \frac{1}{\deg^+_v(G^\star)} \left( a_F \cdot \sum_{i=1}^m x_i + b(v) \right). 
\end{equation}

Observe that out-degree of $v$ was not affected in any of our operations, thus $\deg^+_v  (G^\star)  =  \deg^+_v (G)$.
Therefore, since $\sum_{i=1}^m x_i =  \sum_{e_i \in E^-_v(G)}  F_{e_i}  (G)$ and $F_e(G^\star)=F_e(G)$,
Equation~\eqref{eq:lemma:any_graph:2} is equivalent to
Equation~\eqref{eq:pagerank} for centrality $F(G)$ and edge $e$.
This concludes the proof.
\end{proof}

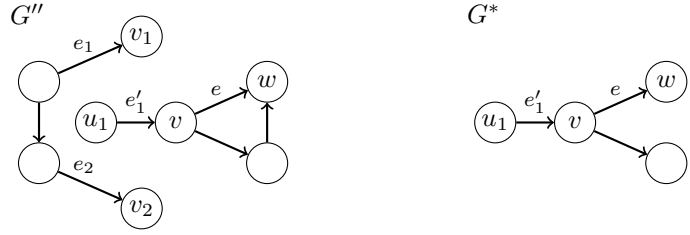
\begin{figure*}[t]
\centering
\begin{tikzpicture}[x=3cm,y=3cm] 
  \tikzset{     
    e4c node/.style={circle,draw,minimum size=0.54cm,inner sep=0}, 
    e4c edge/.style={above,font=\footnotesize}
  }
  \def\dist{2}
  \def\d{0}
  \def\shift{0.1}
  \def\x{0.35}
  \def\xx{0.5}
  \def\y{0.18}
  \def\yy{0.38}

  \def\d{2*\dist}
  \node (G) at (\d -0.05, \yy + 0.1) {$G''$};
  \node[e4c node] (u_1) at (\d, \y) {}; 
  \node[e4c node] (u_2) at (\d, -\y) {};
  \node[e4c node] (v1) at (\d + \shift + \x, \yy) {$v_1$};
  \node[e4c node] (v2) at (\d + \shift + \x, -\yy) {$v_2$};
  \node[e4c node] (v) at (\d + \shift + \xx, 0) {$v$};
  \node[e4c node] (u1) at (\d + 0.5*\xx, 0) {$u_1$};
  \node[e4c node] (w_1) at (\d + 2*\xx, \y) {$w$};
  \node[e4c node] (w_2) at (\d + 2*\xx, -\y) {};
  
  \path[->,draw,thick, e4c edge]
  (u_1) edge node[above, pos=0.4] {$e_1$} (v1)
  (u_1) edge (u_2)
  (u_2) edge node[above, pos=0.4] {$e_2$} (v2)
  (v) edge node[above, pos=0.4] {$e$}  (w_1)
  (v) edge (w_2)
  (w_2) edge (w_1)
  (u1) edge node[above] {$e'_1$} (v)
  ;

  \def\d{3*\dist - 0.5*\xx}
  \node (G) at (\d -0.05 + 0.5*\xx, \yy + 0.1) {$G^*$};
  \node[e4c node] (v) at (\d + \shift + \xx, 0) {$v$};
  \node[e4c node] (u1) at (\d + 0.5*\xx, 0) {$u_1$};
  \node[e4c node] (w_1) at (\d + 2*\xx, \y) {$w$};
  \node[e4c node] (w_2) at (\d + 2*\xx, -\y) {};
  
  \path[->,draw,thick, e4c edge]
  (v) edge node[above, pos=0.4] {$e$}  (w_1)
  (v) edge (w_2)
  %
  (u1) edge node[above] {$e'_1$} (v)
  ;
  
\end{tikzpicture}
\caption{The scheme of the second part of the proof of Lemma~\ref{lemma:any_graph} for $G$ from Figure~\ref{fig:app:any_graph}}
\label{fig:app:any_graph:2}
\end{figure*}

\end{document}